\newtheorem{theorem}{Theorem}
\newtheorem{example}[theorem]{Example}
\newtheorem{lemma}[theorem]{Lemma}
\newtheorem{proposition}[theorem]{Proposition}
\numberwithin{equation}{section} 
\newenvironment{proof}[1][Proof]{\textbf{#1.} }{\ \rule{0.5em}{0.5em}}
\def\R{{\mathbb R}}        
\def\P{{\mathbb P}}        
\def\E{{\mathbb E}}        
\def\1{{\mathbf 1}}        
\def\F{{\mathcal F}}        
\def\var{{\mathop{\mathbf{Var}}}}    
\def\L{{\mathcal L} \,}
\def\setT{{\mathcal T}}
\def\T{{\bar{T}}}
\def\Vk1{{V^{(i-1)}}}
\def\Hk1{{H^{(i-1)}}}
\def\pk1{{p^{(i-1)}}}
\begin{document}
\title{Optimal Timing to Trade Along a Randomized Brownian Bridge}
\author{Tim Leung\thanks{Department of Applied Mathematics, University of Washington, Seattle WA 98195. E-mail:
\mbox{timleung@uw.edu}.  Corresponding author.} \and Jiao Li\thanks{APAM Department, Columbia University, New York, NY 10027; email:\,\mbox{jl4170@columbia.edu}.} \and Xin Li\thanks{Bank of America Merrill Lynch, One Bryant Park, New York, NY 10036; email:\,\mbox{xinli.columbia@gmail.com}. } }
\date{\today}
\maketitle
\begin{abstract}
This paper studies an optimal trading problem that incorporates the trader's market view on the terminal asset price distribution and uninformative noise embedded in the asset price dynamics. We model the underlying asset price evolution   by an exponential randomized Brownian bridge (rBb) and consider various prior distributions for the random endpoint.  We solve for the optimal strategies to sell a stock, call, or put, and analyze the associated delayed liquidation premia.  We solve for the optimal trading strategies numerically and compare them across different prior beliefs. Among our results, we find that disconnected continuation/exercise regions arise when the trader prescribe a two-point discrete distribution and double exponential distribution. 
\end{abstract}
\vspace{10pt}
\noindent {\textbf{Keywords:}\,   speculative trading, Brownian bridge, optimal stopping, variational inequality } \\
\noindent {\textbf{JEL Classification:}\, C41, G11, G13}\\
\noindent {\textbf{Mathematics Subject Classification (2010):}\, 60G40, 62L15, 91G20,  91G80}\\

 \newpage
\section{Introduction}
One fundamental problem faced by  all traders is to determine when to sell an asset or financial derivative over a given trading horizon. The optimal trading decision  depends crucially on the trader's subjective belief of the distribution of the asset price in the future and the observed price fluctuations. By monitoring the asset price evolution over time, the trader decides whether to sell the asset or derivative at the prevailing market price, or continue to wait till a later time.  In this paper, we tackle this problem by  constructing models that reflect    the two major features: (i) the trader's market view on the terminal asset price distribution and the uninformative noise embedded in the asset price dynamics, and (ii) the timing option that gives rise to an optimal stopping problem corresponding to the trader's asset or derivative liquidation. 

In order to  describe  the    asset price evolution,  we present a  randomized Brownian bridge (rBb)  model, whereby the log-price of the asset follows a Brownian bridge with a randomized endpoint representing the random  terminal log-price. In turn, the trader's prior belief and learning mechanism can be   encoded in the drift of the log-price process.  Our model is adapted from the novel work by \cite{cartea2013algorithmic}. They model the asset's mid-price as a randomized Brownian bridge, and solve a   trading problem that maximizes the expected trading revenue by placing either limit or market orders while penalizing the running inventory. In comparison, the underlying asset price in our model is an exponential rBb with different randomized endpoints, with examples including discrete, normal, and double exponential    distribution. Also,  we introduce an optimal stopping approach to the trader's liquidation problem, which is applicable not only to selling the underlying asset, but also options written on it.

To solve the trader's optimal stopping problem, we devise and apply a number of analytical tools. We first define the  optimal liquidation premium that  represents the additional value from optimally waiting to sell, as opposed to immediate liquidation. Indeed, as soon as this premium vanishes it is optimal for the trader to sell. On the other hand, if this premium is always strictly positive, then the trader finds it optimal to wait through maturity. Therefore, the optimal liquidation premium provides not only new financial interpretations, but also another avenue of analytical investigation of the optimal stopping problem. Among our results,  we  identify the conditions under which it is optimal to immediately liquidate, or hold the asset/option position through expiration. Furthermore, we prove that the optimal strategy to liquidate a long-call-short-put position is identical to the optimal strategy to sell the underlying stock under the same trading horizon. This timing parity holds for any distribution of the randomized endpoint.   Moreover, we  derive the  variational inequality associated with the optimal stopping problem, and present a finite-difference method to solve for the optimal trading boundaries.

In the literature,  Brownian bridges  have been  used to represent market uncertainty or uninformative noise (see e.g. \cite{brody2008informed}, \cite{hughston2012pricing}, and \cite{macrina2014heat}).  The randomized Brownian bridge model in this paper belongs to the information-based approach to pricing and trading. 
Among related studies,  \cite{brody2008information} and \cite{filipovic2012conditional} study information-based models where the asset prices are computed via conditional expectation with respect to an information process. 

Randomized Brownian bridges, or their variations, have great potential applicability in a number of  finanical applications. For example, while futures prices are supposed to be equal to the spot price in theory, it is observed that  some commodity futures prices do not exactly converge to  the corresponding spot prices upon maturity (see \cite{guoleung17}). \cite{futuresBS} and \cite{futuresDaiKwok} investigate arbitrage strategies on stock index futures where the index arbitrage basis is assumed to follow a Brownian Bridge.  One can also look for more potential applications in index tracking and  exchange-traded funds  (see \cite{leung2016leveraged}). The randomized endpoint provides added flexibility in modeling random shocks to the asset price on a future date, which is   applicable to events such as  Federal Reserve announcements, and  earnings surprises (see e.g. \cite{johannes2006earnings} and \cite{leung2014accounting}). 

As for optimal stopping problems involving  Brownian bridges, \cite{ekstrom2009optimal} consider   optimal single stopping  of a Brownian bridge or odd powers of a Brownian bridge, without discounting. \cite{ekstrom2017optimal} use an  optimal stopping approach to maximize the expected value of  a Brownian bridge with an unknown pinning point. The optimal double problem is further studied in \cite{baurdoux2015optimal}, where they maximize the expected spread between  the payoffs at the entry and exit times where the underlying follows a Brownian bridge.

In related works on optimal stopping problems for securities trading under a finite horizon setting, \cite{LeungLudkovski2011,LeungLudkovski2} introduce the concept of delayed purchase premium and   analyze the problem of purchasing equity European and American options in an incomplete market, where the investor's belief on asset price dynamics may differ from prevalent market views. \cite{LeungLiu2012} analyze the delayed purchase premium associated with credit derivatives under a multi-factor intensity-based default risk framework, and derive the optimal trading strategies.  In \cite{LeungShirai}, the authors study the optimal timing to sell an asset or option subject to a path-dependent risk penalty.

The rest of the paper is structured as follows.   In Section \ref{sect-stockdynamics}, we present  the randomized Brownian bridge model for the underlying asset price, show how different prior beliefs are encapsulated in the stochastic differential equation corresponding to the price dynamics, and illustrate the price behaviors through Monte Carlo simulation. In Section \ref{sect-optimalproblem}, we formulate and analyze the optimal liquidation problem and present numerical results to illustrate the optimal trading strategies. In Section \ref{sect-finitedifference}, we summarize  the numerical algorithm used for  solving the optimal stopping problems under different settings in this paper. A number of  proofs are collected in Section \ref{proofs}.

\cite{oshima2006optimal}, \cite{peskir2006optimal}

\section{Prior Belief and Price Dynamics}\label{sect-stockdynamics}
The model consists of a single asset whose positive price process is denoted by $(S_t)_{0\le t\le T}$. The  trader in our model specifies a  prior belief on the future distribution of the stock price at a fixed future time $T$. We denote  $X_t =\log S_t$ to be the log-price of the asset, with $X_0$ being the stock's initial log-price. The trader's belief is described by a real-valued random variable $D$ to be realized at time $T$ so   that the terminal log-price   is given by
\begin{align}
X_T = X_0 + D.
\end{align} 
To avoid arbitrage, we require that $D$ have finite second moment and $\P(D > rT) \in (0, 1)$, where $r$ is positive risk-free interest rate, under the historical probability measure $\P$. 

As time progresses, new information arrives in the form of changes in asset price. The price fluctuation can also be viewed as noise prior to the realization of the terminal log-price $D$.   We model the log-price process as a randomized Brownian bridge. To this end, we first let  $(\beta_{t})_{0\le t\le T}$ be a standard Brownian bridge   \begin{align}
\beta_{t} = B_t - \frac{t}{T}B_T, \quad t \in [0,T],
\end{align} 
where  $(B_t)_{0\le t \le T}$ is a standard Brownian motion.  The process $\beta$ starts at $0$ and ends at $0$. It can be viewed as uninformative noise, which vanishes both at times $0$ and $T$. We assume that the processes  $B$ and thus  $\beta$ are independent of the   random variable $D$. 

Then, the  log-price process  is given by 
\begin{align}
X_t  = X_0 + \sigma \beta_{t} + \frac{t}{T} D, \quad t \in [0,T], \label{Xpath}
\end{align} 
where $\sigma > 0$ is a constant parameter.  The trader observes the price evolution of the asset, and the corresponding filtration  $\mathbb{F}\equiv (\F _t)_{0\le t\le T}$ is  generated by the  log-price process  $X$. In other words, the trader does not directly observe the standard Brownian bridge $\beta$ over time. Mathematically,  this means that $\beta _t$ is not  adapted to $\mathbb{F}$. The financial interpretation is that  $\beta$ reflects the  uninformative noise in the markets such as market sentiments and rumors. Without direct observation of  $\beta_t$, the trader is unable to  decompose $X_t$ into   $\beta_t$ and $D$ at any time $t<T$. At time $T$, the trader observes the  realization of $D$ and thus $X_T$. The parameter $\sigma$ allows us to control effect of $\beta$ on the log-price  fluctuation. By inspecting  equation \eqref{Xpath}, the ratio $\frac{t}{T}$ can be viewed as the rate at which the value of $D$ is revealed, going linearly from fully hidden at time $0$ to fully revealed at time $T$. 

By Proposition 1 of \cite{cartea2013algorithmic},   the log-price process satisfies the stochastic differential equation (SDE)
\begin{align}
dX_t = A(t,X_t)dt + \sigma dW_t,\label{LogPrice}
\end{align}
for $t\in[0,T]$, where $(W_t)_{0\le t\le T}$ is an $\mathbb{F}$-adapted standard Brownian motion  under the probability measure $\P$.
The drift term is
\begin{align}
A(t,X_t) = \frac{a(t,X_t) - (X_t - X_0)}{T - t},\label{AA}
\end{align}
where 
\begin{align}
a(t,x) := \E[D | X_t = x] = \frac{\int_{-\infty}^\infty z\text{exp}\left(z\frac{x - X_0}{\sigma^2(T-t)} - z^2\frac{t}{2T\sigma^2(T-t)}\right)dF(z)}{\int_{-\infty}^\infty \text{exp}\left(z\frac{x-X_0}{\sigma^2(T-t)} - z^2\frac{t}{2T\sigma^2(T-t)}\right)dF(z)},\label{a}
\end{align}
and $F(\cdot)$ is the cumulative distribution function of the random variable $D$. We refer to Appendix A of \cite{cartea2013algorithmic} for the derivation of \eqref{LogPrice}. The $\mathbb{F}$-Brownian motion $W_t$ appearing in \eqref{LogPrice} can be considered as the market information accessible by the trader. Unlike $\beta_t$, the value of $W_t$ contains real information relevant to the risky asset price. \eqref{LogPrice}-\eqref{a} represent how the price innovations reflect the   probability distribution of $D$ and market information flow. The incomplete information in \eqref{Xpath} is reformulated as a complete information by projecting the log-price innovations onto the observable filtration. In that sense, the information-based approach is more flexible to add additional interpretation and intuition based on observable price process, as we present in the following sessions of the paper.

To understand the  conditional expectation in  \eqref{a}, we start with the definition
\begin{align}
\E [D | X_t = x] = \int _{-\infty} ^{\infty} z \pi _t(z) dz,\label{edxx}
\end{align}
where $\pi _t(z)$ is the conditional probability density or mass function for the random variable $D$ defined by 
\begin{align}
\pi _t(z) = \frac{d}{d z} \P(D \leq z | X_t = x).\nonumber
\end{align}
Using Bayes formula, the conditional probability density is given by 
\begin{align}
\pi _t(z) = \frac{p(z)\rho_{X_t}(x | D = z)}{\int _{-\infty} ^{\infty} p(z)\rho_{X_t}(x | D = z)dz },\label{bayes}
\end{align}
where $p(z)$ denotes the probability density or mass function for  $D$, and $\rho_{X_t}(x | D = z)$ denotes the conditional density function for the random variable $X_t$ given  $D = z$. According to \eqref{Xpath}, for any fixed $t\in [0,T)$, $X_t$ given $D=z$ is Gaussian, i.e. \[X_t\big{|}_{D=z} \sim \mathcal{N} \left( X_0 + \frac{t}{T} z  \,, \,  \sigma ^2 \frac{t}{T} (T - t)  \right)\,.\]  Therefore, we can express  the conditional probability density for $X_t$  as
\begin{align}
\rho_{X_t}(x | D = z) = \frac{1}{\sigma \sqrt{2 \pi \frac{t(T - t)}{T}}} \exp\left( - \frac{T (x - X_0 - \frac{t}{T} z) ^2}{2 \sigma ^2 t (T - t)}\right). \nonumber
\end{align}
Substituting this expression into \eqref{bayes}, we have
\begin{align}
\pi _t (z) = \frac{p(z) \exp \big(z \frac{x - X_0}{\sigma ^2 (T - t)} -  z^2 \frac{t}{2T \sigma ^2 (T - t)}\big)}{\int _{-\infty} ^{\infty} p(z) \exp \big(z \frac{x - X_0}{\sigma ^2 (T - t)} -  z^2 \frac{t}{2T \sigma ^2 (T - t)}\big) dz }.  \label{pitz}
\end{align}
Finally, equation \eqref{a} follows after we substitute \eqref{pitz} into \eqref{edxx}.

The  asset price is an exponential rBb  defined by  $S_t = \exp(X_t)$. By  Ito's lemma, we obtain the SDE 
\begin{align}
d S_t = \left(A(t, X_t) + \frac{\sigma^2}{2}\right)S_t \,dt +  \sigma S_t  \,dW_t, \quad  0\le t\le T.\end{align}
  In this paper,  we will consider  three different distributions for $D$: two-point discrete distribution, normal distribution, and double exponential distribution. The first one is discrete while the other two  are continuous distributions. Compared to the   normal distribution, the  double exponential distribution is capable of generating two heavier tails that can be symmetric or asymmetric. Next, let us illustrate the effect of the distribution of $D$ on the  asset price  dynamics.

\begin{example}\label{example1}
Suppose that  the trader's prior belief on the future log-price  follows a two-point discrete distribution  defined by 
\begin{align}
D_{\delta} = \begin{cases} \delta_u \text{ with probability } p_u, \\ \delta_d \text{ with probability } p_d = 1-p_u,\end{cases}
\end{align}
with mean and variance, respectively,
\begin{align}
\E[D_{\delta}] &= \delta_u p_u + \delta_d p _d, \\
\var[D_{\delta}] &= \delta_u ^2 p_u (1 - p_u) + \delta_d ^2 p_d (1 - p_d) - 2 \delta_u \delta_d p_u p_d .
\end{align}
Then,  the drift term $A(t, x)$ of $X_t$ in \eqref{LogPrice} is given by
\begin{align}
A(t, x) = \frac{1}{T - t}\left[\frac{\delta_u u(t, x) + \delta_d d(t, x)}{u(t, x) + d(t, x)} - (x - X_0)\right]\,, 
\label{A_twopoint}
\end{align}
where
\begin{align}
u(t, x) = p_u \exp \left( \delta_u \frac{x - X_0}{\sigma^2(T-t)} - \delta_u^2\frac{t}{2T\sigma^2(T - t)} \right),
\end{align}
\begin{align}
d(t, x) = p_d \exp \left( \delta_d \frac{x - X_0}{\sigma^2(T-t)} - \delta_d^2\frac{t}{2T\sigma^2(T-t)}\right).
\end{align}
 
\end{example}

A  special case of Example \ref{example1} is when   $D$ is a constant $\delta$. Then,  the log-price process \eqref{LogPrice} is simply a Brownian bridge that starts  at $X_0$ and ends at $X_0 + \delta$. Its SDE is given by \begin{align}
dX_t = \frac{X_0 + \delta - X_t}{T-t} dt + \sigma dW_t, \quad 0\le t\le T.
\end{align}

\begin{example}\label{example2}
Suppose that  the trader's prior belief on the future log-price follows a normal distribution with mean $\mu$ and variance $\sigma_D^2$, i.e. 
\begin{align}
D_{n} \sim \mathcal{N}\big(\mu, \sigma_D^2\big).
\end{align}
Note that $\sigma _D$ is different from $\sigma$ in \eqref{LogPrice}. 
The drift of $X_t$ in \eqref{LogPrice} is given by
 \begin{align}
A(t, x) = \frac{(\sigma_D ^2 - T \sigma ^2) (x - X_0)  + \mu \sigma^2 T}{t \sigma_D ^2 + T \sigma^2 (T-t)}.
\label{A_normal}
\end{align}
The derivation is presented in Section \ref{CompNormal}.
\end{example}

As an alternative continuous distribution to the normal distribution, we now consider the double exponential distribution, which has been used to model random jumps in asset prices (see e.g. \cite{kou2002jump}).

\begin{example}\label{example3}  Suppose that  the trader's prior belief on the future log-price is represented by a double exponential random variable $D_e$ with  the  pdf
\begin{align}
f(z) = \mathbf{1}_{\{ z < \theta\}} p_1\lambda_1 e^{\lambda_1 (z -\theta)} + \mathbf{1}_{\{ z \geq \theta\}} p_2\lambda_2 e^{-\lambda_2 (z -\theta)},
\end{align}
where $p_1, p_2 >0$ and $p_2 = 1 - p_1$. The mean and variance of $D_e$ are, respectively, 
\begin{align}
\E[D_e] &= \theta - \frac{p_1}{\lambda _1} + \frac{p_2}{\lambda _2},\label{double_mean} \\ 
\var[D_e] &= \frac{2 p_1}{\lambda _1 ^2} + \frac{2 p_2}{\lambda _2 ^2} - (\frac{p_1}{\lambda _1} - \frac{p_2}{\lambda _2})^2.\label{double_var}
\end{align}
The drift of $X_t$ in \eqref{LogPrice} is given by
\begin{align}
A(t, x) = \frac{1}{T - t}\left[\frac{\sum _{i=1,2}  N_i (t, x) }{\sum _{i=1,2}H_i (t, x)} - (x - X_0)\right]\,,
\label{A_double}
\end{align}
where  
\begin{align}
\begin{cases} \displaystyle
\begin{split}
N_i (t, x) &= (-1)^{i}  p_i \lambda_i  \frac{1}{2\zeta} \exp \left( - \zeta(\theta + \frac{b_i}{2\zeta})^2 \right) \exp (\frac{b_i^2 - 4 \zeta  c_i}{4 \zeta }) \notag\\
&\quad -  p_i \lambda_i  \frac{b_i}{2 \zeta } \frac{\sqrt{\pi}}{\sqrt{ \zeta }} \exp (\frac{b_i^2 - 4 \zeta c_i}{4 \zeta })  \Phi((-1)^{i-1} d_i),  \\ 
\\ H_i (t, x) &= p_i \lambda_i \frac{\sqrt{\pi}}{\sqrt{ \zeta }} \exp (\frac{b_i^2 - 4 \zeta c_i}{4 \zeta }) \Phi((-1)^{i-1} d_i),
\end{split}	
\end{cases} 	
\end{align}
and
\begin{align*}
\begin{cases} 
\begin{split}
\zeta &\equiv \zeta (t) = \frac{t}{2T\sigma^2 (T-t)},  \\ 
\\ 
b_i &\equiv b_i (t, x) = -\left( \frac{x - X_0}{\sigma^2 (T-t)}  + (-1)^{i-1} \lambda_i \right),\\
\\ 
c_i &= (-1)^{i-1} \lambda_i \theta,\\
\\
d_i &\equiv d_i (t,x) = \sqrt{2 \zeta (t)} \left(\theta + \frac{b_i (t, x)}{2\zeta (t)} \right),
\end{split}	
\end{cases}		
\end{align*}
and 
$\Phi(\cdot)$ is the standard normal cumulative distribution function.  We present the computations in Section \ref{CompDouble}.
\end{example}

The paths of the log-price process $X$ can be simulated over discrete times using the standard Euler-Maruyama method.  Denote $\delta t$ as the discretization step, and set $t_i = i\delta t$, for $i=0, 1, 2, \ldots$. Then, the path of $X$, starting at $X_0$ at time $0$,  can be simulated iteratively as follows:  
\begin{align}
X_{t_{i+1}} =  X_{t_i} + A(t_i, X_{t_i}) \delta t + \sigma \sqrt{\delta t} \, \epsilon_i, \label{simulationX}
\end{align}
where    $(\epsilon_i)_{i = 1, 2,\ldots}$ is a sequence of IID  $\mathcal{N}(0,1)$ random variables. This method does not require the simulation of the random variable $D$ directly because the distributional  characteristics of $D$ are encapsulated in the drift function $A(t,x)$ (see \eqref{A_twopoint}, \eqref{A_normal} and \eqref{A_double}). The procedure takes in the current value of $X_{t_i}$ at each time step to compute the drift $A(t_i, X_{t_i})$, and each path evolves without the knowledge of  the Brownian bridge $\beta$ or the realization of the random variable $D$ as both $\beta$ and $D$ are never simulated. This is consistent with the fact that the trader cannot observe $D$ prior to $T$, and only has knowledge of the log-price process $X$ over time but not the Brownian bridge $\beta$. The paths will end up with  the terminal distribution resembling that of $D$. In Example \ref{example1},  under two-point discrete distribution for $D$, simulating according to \eqref{simulationX} will generate a path ending up at either $X_0 + \delta _u$ or $X_0 + \delta _d$, each with probability $p_u$ and $p_d$ respectively. For instance, in  Figure \ref{x_vs_t_sde}(a), we have $p_1 = p_2 = 0.5$, meaning that  about half of the paths will end up at either $X_0 + \delta _u$ or $X_0 + \delta _d$.

As seen in Examples \ref{example1}-\ref{example3}, the specification of the terminal log-price distribution directly affects the drift of $X_t$. To better observe the different structures of the drift   $A(t,x)$ under different distributions for $D$, we plot in  Figure \ref{A_compare} the function  $A(t,x)$ at $t= 0.1$ (panel (a)) and $t=0.8$ (panel (b)) with the three distributions sharing the same mean and variance.  Under the normal distribution, $A(t,x)$ is linear in $x$. In contrast,  it is neither linear nor monotone under the two-point discrete   and double exponential distributions. Moreover, under the two-point   distribution, $A$ is positive  when the stock price is low and negative when the stock price is relatively high, meaning that the asset price tends to have positive drift when price is low and negative drift when price is high.  However, the opposite is observed in $A$ under the normal   and double exponential distributions.

\clearpage 

\begin{figure}[h]
\centering
\begin{subfigure}{.45\textwidth}\centering
	\includegraphics[width=\textwidth]{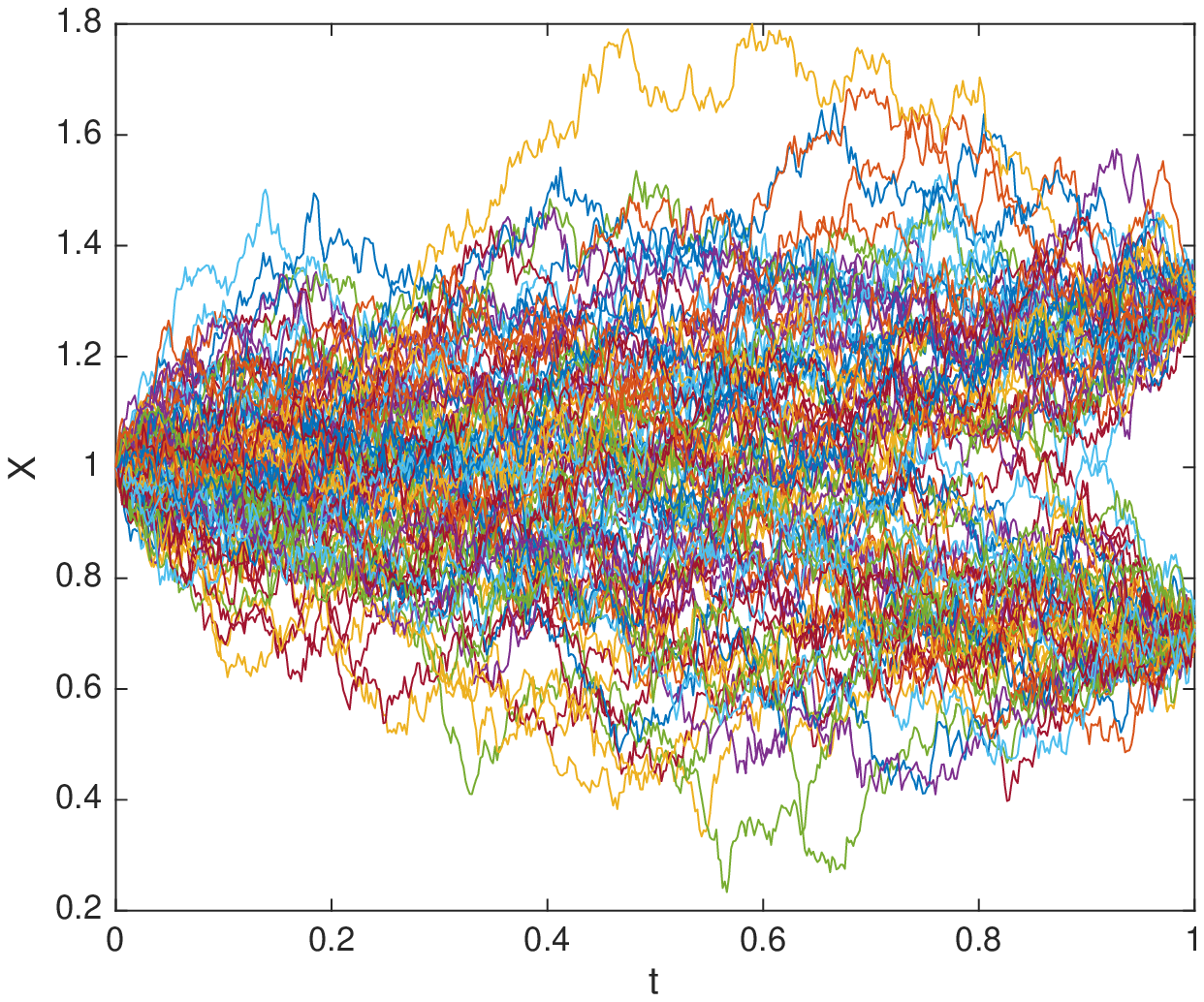}
\caption{}\label{100paths_two_point}
\end{subfigure}\hfill
\begin{subfigure}{.45\textwidth}\centering
	\includegraphics[width=\textwidth]{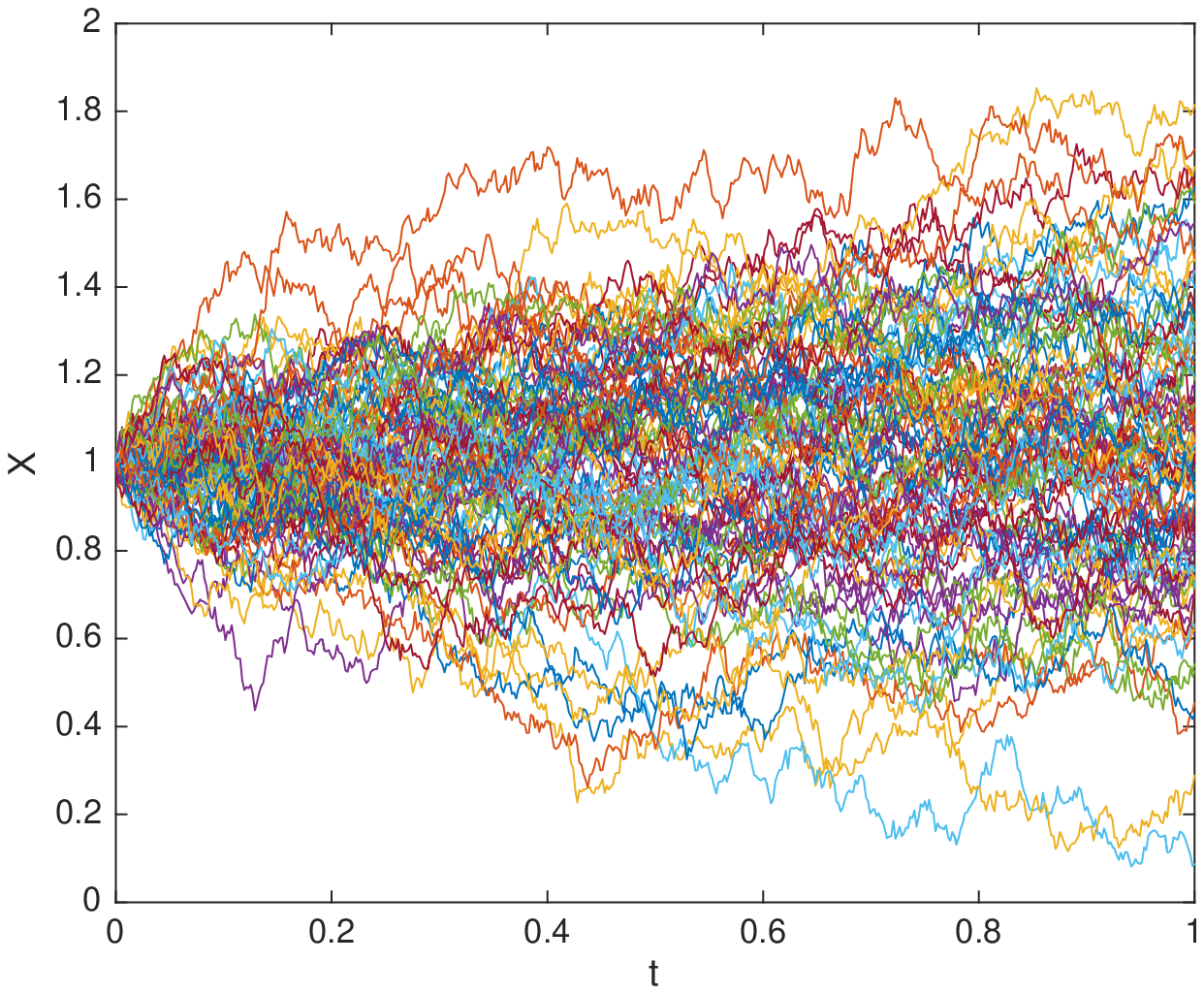}
\caption{}\label{100paths_normal}
\end{subfigure}  
\caption{\small{Path simulation of \eqref{LogPrice} with the prior belief on the future log-price following (a) two-point discrete distribution and (b) normal distribution. Parameters: (a) $\delta _u = 0.3, \delta _d = -0.3, p_u = 0.5, p_d = 0.5$; (b) $\mu = 0, \sigma _D = 0.3$. Common parameters: $X_0 = 1, T = 1, \sigma = 0.4$.}}\label{x_vs_t_sde}
\end{figure}

\vspace{40pt}

\begin{figure}[h]
\centering
\begin{subfigure}{.45\textwidth}\centering
	\includegraphics[width=\textwidth]{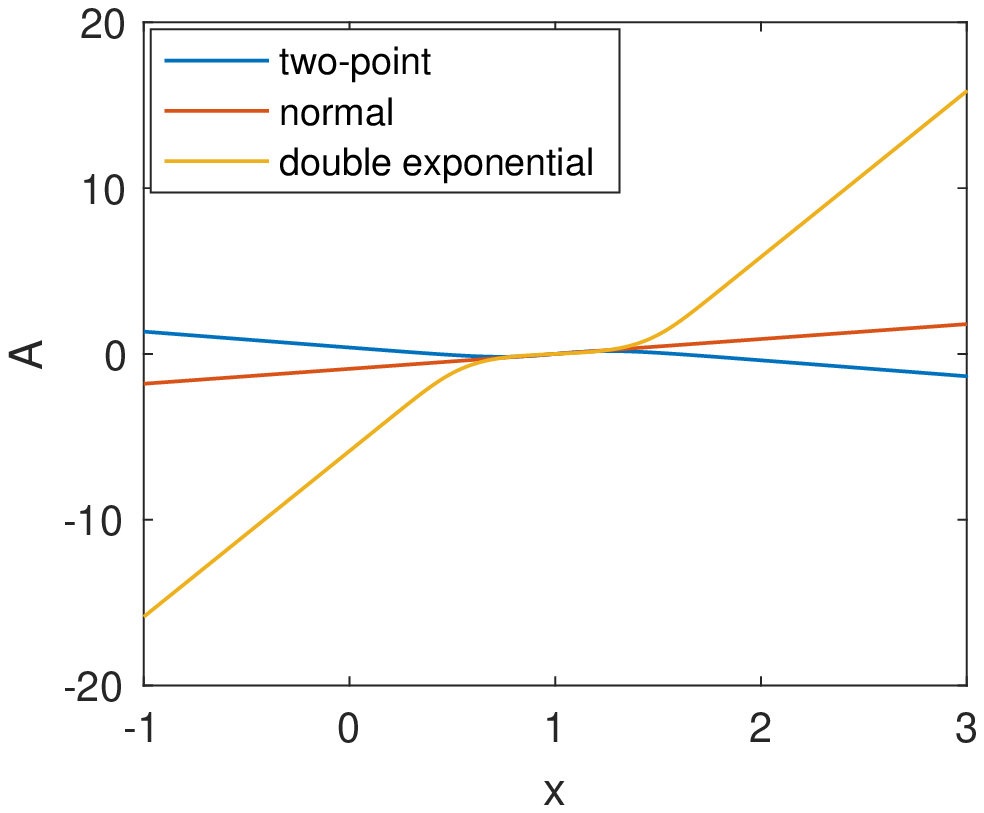}
\caption{}\label{A_t_01}
\end{subfigure} \hfill
\begin{subfigure}{.45\textwidth}\centering
	\includegraphics[width=\textwidth]{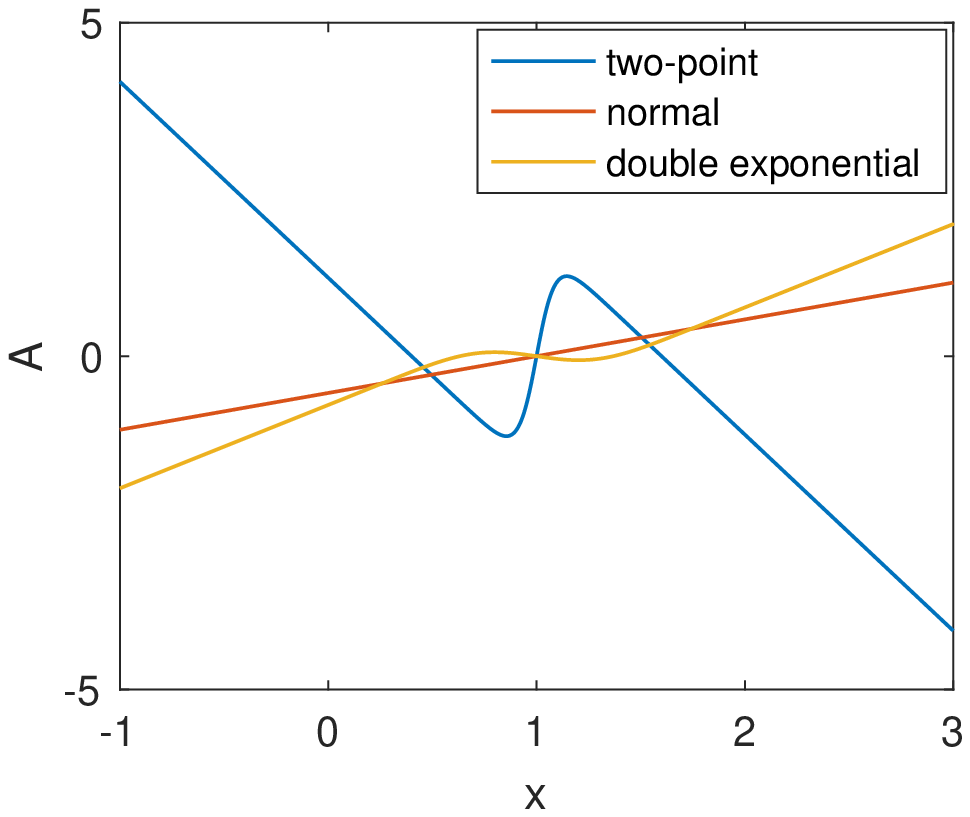}
\caption{}\label{A_t_08}
\end{subfigure}  
\caption{\small{$A(t,x)$ under three different distribution with common \textsc{Mean} = $0$ and \textsc{Var} = $0.36$ at $t = 0.1$ (a) and $t = 0.8$ (b). Parameters: (two-point) $\delta _u = -\delta _d = 0.6, p _u = p _d =0.5$; (normal) $\mu = 0, \sigma _D = 0.6$; (double exponential) $\theta = 0, p_1 = p_2 = 0.5, \lambda _1 = \lambda _2 = 2.357$. Common parameters: $S_0 = 2.72 (X_0 = 1), r=0.1, \T =1, T = 1.1, \sigma = 0.4.$}}\label{A_compare}
\end{figure}
\clearpage

\section{Optimal Liquidation Problems}\label{sect-optimalproblem}
With the asset  price dynamics given above, we now consider   a trader who holds  the underlying asset $S$ or an option written on $S$, and seeks to maximize the expected value from selling the security. Let $f(t,x) \in C^2 ([0, \infty) \times \R)$ be a generic reward function, representing the value received from the security sale at time $t$ at log-price $x$. We assume  a constant interest rate $r>0$, which is also the discount rate used by the trader. 

In order to determine the optimal timing to sell, the trader solves the optimal stopping problem
\begin{align}
V(t, x) = \sup_{\tau \in\setT_{t,\bar{T}}}{\E}\left\{e^{-r(\tau - t)}f(\tau, X_\tau) | X_t = x \right\}, \label{OptimalProb}
\end{align}
where $\setT_{t,\bar{T}}$  is the set of all stopping times with respect to $\mathbb{F}$ taking values between $t$ and $\T$, with $0\le t\le \T \le T$. Here, $\bar{T}$ is the trading deadline which can come before the expiration date of the option $T$.  For all securities considered herein, the  associated reward function $f(t,x)$ is defined through $T$. 

 This problem can be represented in an alternative probabilistic form. To this end, we  first define  the process 
\begin{align}
Y_t = e^{-rt} f(t, X_t), \quad 0\le t \le T.\label{Y}
\end{align}
By  \eqref{LogPrice} and Ito's formula,  we  obtain the SDE
\begin{align}
dY_t 
	&= e^{-rt}\left(-rf(t,X_t) + f_t (t, X_t) + f_x (t, X_t) A(t, X_t) + \frac{\sigma^2}{2}f_{xx} (t, X_t)\right)dt + e^{-rt}f_x (t, X_t)\sigma dW_t \nonumber\\
	&= e^{-rt} G(t, X_t) dt +  e^{-rt} f_x (t, X_t) \sigma dW_t, \label{Ito2}
\end{align}
where we have denoted $f_t\equiv\frac{\partial{f}}{{\partial t}}$, $f_x\equiv\frac{\partial{f}}{{\partial x}}$, $f_{xx}\equiv\frac{\partial ^2 f}{{\partial x^2}}$, and 
\begin{align}
G(t, x) := -rf(t, x) + f_t (t, x) + f_x (t, x) A(t, x) + \frac{\sigma^2}{2} f_{xx} (t, x). \label{G}
\end{align}
The function $G(t,x)$, called the drive function (see \cite{LeungShirai} for the terminology), determines the sign of the drift of the SDE for the discounted reward process  $Y_t = e^{-rt} f(t, X_t)$.  

Integrating \eqref{Ito2} and substituting in \eqref{OptimalProb}, the value function can be expressed as
\begin{align}
V(t, x) =  \sup_{\tau \in\setT_{t,\bar{T}}}{\E} \left\{\int_t^\tau e^{-r(u-t)}G(u, X_u)du | X_t = x \right\} + f(t, x). \label{VG}
\end{align}
Rearranging the terms in \eqref{VG}, we define the difference between the value function $V(t, x)$ and the reward function $f(t, x)$  to be the delayed liquidation premium, i.e.
 \begin{align}
L (t, x) &:= V(t, x) - f(t, x)\notag \\
& =\sup_{\tau \in\setT_{t,\bar{T}}}{\E} \left\{\int_t^\tau e^{-r(u-t)}G(u, X_u)du | X_t = x \right\}.
\label{L}
\end{align}
For every position held, there is an embedded timing option to sell. The delayed liquidation premium quantifies the value of optimally waiting to exercise this timing option. Since $V(t,x)\ge f(t,x)$ for all $(t,x)$, this follows from \eqref{L} that  $L(t,x)\ge 0$ for all $(t,x)$, meaning that the delayed liquidation premium is always positive.    

By standard optimal stopping theory \citep[Theorem D.12]{KaratzasShreve01}, the optimal liquidation time, associated with $V(t,x)$ or $L(t,x)$, is given by
\begin{align}
\tau^* &= \inf\{ \,u \in [t,\bar{T}]\,:\, V(u,X_u) =f(u,X_u)\, \}\notag\\
&= \inf\{ \,u \in [t,\bar{T}]\,:\, L(u,X_u) =0\, \}.\label{Ltau}
\end{align}
In other words, it is optimal for the trader to exercise the timing option and close out the position  as soon as the optimal liquidation premium $L$ vanishes. Accordingly, the trader's optimal liquidation strategy can be described by the exercise region $\mathcal S$ and continuation region $\mathcal D$, namely,
\begin{align}
\mathcal{S} &=\{(t, e^x)\in [0,\bar{T}]\times \R_+ : \ L(t,x)=0\},\\
\mathcal{D}  &=\{(t, e^x)\in [0,\bar{T}]\times \R_+ : \ L(t,x)>0\}.
\end{align}

 On the other hand, if the delayed liquidation premium is always strictly positive, then the trader finds it optimal to wait through the end of the trading horizon. In particular, we may have $L(\bar{T},x)>0$ for all $x$, so we interpret $\tau^* = \T$ as never exercising at all. Therefore,    we can now  identify the conditions under which it is optimal to immediately liquidate, or hold the asset/option position through $\bar{T}$.

\begin{proposition} \label{prop_G}
 Let $t\in[0,\T]$ be the current time. Then, we have
\begin{enumerate}\item $G(u, x) > 0$,  $\forall(u, x)\in [t,\T]\times \mathbb{R}_+ \implies \tau^* = \T$. 
\item  $G(u, x) \leq 0$, $ \forall(u, x)\in [t,\T]\times \mathbb{R}_+ \implies \tau^* = t$.
\end{enumerate}\end{proposition}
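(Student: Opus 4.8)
The plan is to read both implications straight off the probabilistic representation \eqref{L} of the delayed liquidation premium, using nothing beyond the sign of the drive function $G$ together with the standing bound $L\geq 0$. In each case the idea is to feed a single deterministic stopping time into the supremum in \eqref{L}: the deadline $\bar{T}$ for the first statement and the current time $t$ for the second.

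For statement (1), I would first observe that the premium always vanishes at the deadline, since $\setT_{\bar{T},\bar{T}}=\{\bar{T}\}$ makes the integral in \eqref{L} run over a degenerate interval, so $L(\bar{T},\cdot)=0$; hence $\bar{T}$ always belongs to the zero set defining $\tau^*$ in \eqref{Ltau}, and it is enough to show $L(s,x)>0$ for every $s\in[t,\bar{T})$ and every log-price $x$. Taking $\tau=\bar{T}$ as a suboptimal competitor in \eqref{L} gives
\[
L(s, x) \geq \E\left\{\int_s^{\bar{T}} e^{-r(u-s)} G(u, X_u)\, du \,\big|\, X_s = x\right\}.
\]
By the hypothesis $G>0$ the integrand is almost surely strictly positive on the nondegenerate interval $[s,\bar{T}]$, so the right-hand side is strictly positive. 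Thus $L>0$ on $[t,\bar{T})$ along every path, the first time $L$ vanishes is the deadline, and $\tau^*=\bar{T}$.

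For statement (2), I would run the same bound in reverse. For an arbitrary admissible $\tau\in\setT_{t,\bar{T}}$ the hypothesis $G\leq 0$ forces the pathwise integral $\int_t^\tau e^{-r(u-t)}G(u,X_u)\,du\leq 0$, so its expectation is nonpositive and, taking the supremum, $L(t,x)\leq 0$. Together with $L(t,x)\geq 0$—which comes from $V\geq f$, obtained by setting $\tau=t$ in \eqref{OptimalProb}—this yields $L(t,x)=0$. By \eqref{Ltau} the current time $t$ already lies in the zero set of $L$, so $\tau^*=t$ and immediate liquidation is optimal.

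The sign bookkeeping is routine, and no delicate estimate is needed: in (1) a strictly positive (possibly infinite) lower bound already gives $L>0$, while in (2) a nonpositive integrand gives a nonpositive expectation outright. The only point I would spell out is the strict positivity in (1), which I would justify by Tonelli's theorem—legitimate because $G\geq 0$ there—writing $L(s,x)\geq\int_s^{\bar{T}}e^{-r(u-s)}\,\E[G(u,X_u)\,|\,X_s=x]\,du$ and noting that each inner expectation is strictly positive since $X_u$ admits a strictly positive density on $\R$. This interchange, rather than any hard analysis, is the crux.
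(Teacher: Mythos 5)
Your proposal is correct and takes essentially the same approach as the paper, which proves both implications in one line directly from the representation \eqref{VG}: a positive (resp.\ nonpositive) integrand $G$ makes the integral maximized by the longest (resp.\ shortest) stopping time, i.e.\ $\tau^*=\bar{T}$ (resp.\ $\tau^*=t$) --- precisely your competitor bound and squeeze argument made explicit via \eqref{Ltau}. The only difference is that you spell out the strict positivity of $L$ on $[t,\bar{T})$, where the Tonelli interchange is in fact unnecessary: the pathwise integral of a strictly positive continuous integrand over a nondegenerate interval is already strictly positive, so its expectation is too.
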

\begin{proof} From \eqref{VG} that if $G(u, x)$ is positive (resp. negative) $\forall (u,x) \in [t, \T] \times \R ^+$, then the discounted reward function is maximized at the longest (resp. shortest) stopping time, i.e. $\tau^* = \T$ (resp. $\tau^* = t$).
\end{proof}

In addition to the two extremal cases, we can also address other cases and solve for the trader's nontrivial trading strategies.  To do so, let us write down the variational inequality associated with the value function $V(t,x)$ (see \eqref{OptimalProb}) with a general reward function $f(t,x)$. First define the differential operator
 \begin{align}
 \L\{\cdot\} := -r \cdot + \frac{\partial\cdot}{\partial t} + A(t,x)\frac{\partial\cdot}{\partial x} + \frac{\sigma^2}{2}\frac{\partial^2 \cdot}{\partial x^2}. \label{Lop}
\end{align}
The optimal stopping problem  $V$ is solved from  the variational inequality:
\begin{align}
\textrm{max}\left\{\,\L V(t,x)\,, f(t,x) - V(t,x)\,\right\} &= 0,\label{VIV}
\end{align}
where $(t, x) \in [0,\T) \times \R$, with the terminal condition $V(\bar{T},x) = f(\bar{T},x)$ for all $x\in \R$. We refer to \citep[Section 7]{LeungShirai} for a detailed proof for the existence and uniqueness  of a strong solution to variational inequalities of this form. A more comprehensive reference is \cite{Bensoussan}. We will discuss in Section \ref{sect-finitedifference} our numerical scheme to solve for the optimal trading strategies.

Next, we examine the trading strategies for stock and options, and study the varying effects of the trader's belief encoded in the random variable $D$. For each security type, we will derive  the corresponding drive function. It will then be the inputs for the variational inequality \eqref{VIV}, which will be solved numerically for the optimal trading strategies.  We consider   several  combinations in securities   and beliefs to see any major differences  in the strategies.

 \subsection{Stocks}
For selling the stock $S$, the reward function is simply $f(x) = e^x$. Then, the drive function, denoted by $G_{stock} (t, x)$,  for selling the stock can be computed via \eqref{G}. Precisely, we get 
\begin{align}
G_{stock} (t, x) = (-r + A(t, x) + \frac{1}{2} \sigma ^2) e^x, \label{Gstock}
\end{align}for $(t,x) \in [0,T]\times \R$. As seen in Figure \ref{A_compare}, the function $A(t,x)$ and thus drive function depend heavily on the prescribed distribution of $D$, and may be nonlinear. In general, it is difficult to pinpoint the behavior of the drive function $G_{stock}(t,x)$. Nevertheless, under the normal distribution for $D$, we obtain the following properties for $G_{stock}$ in $x$ and in $t$ respectively.

\begin{proposition}\label{normalx}
Suppose that  the trader's belief on log-price follows a normal distribution as in Example \ref{example2}. Then, the drive function $G_{stock} (t, x)$ given  in \eqref{Gstock} is
\begin{enumerate}[(i)]
\item downward-sloping and concave in $x$ if 
\[\sigma _D > \sqrt{T} \sigma   \text{ and }   x \leq q(t) - 2,\] 
or 
\[\sigma _D < \sqrt{T} \sigma   \text{ and }   x \geq q(t) - 1,\]

\item downward-sloping and convex in $x$ if 
\[\sigma _D > \sqrt{T} \sigma   \text{ and }  q(t) - 2 \leq x \leq q(t) - 1,\]

\item upward-sloping and concave in $x$ if 
\[\sigma _D < \sqrt{T} \sigma   \text{ and }  q(t) - 2 \leq x \leq q(t) - 1,\] 

\item upward-sloping and convex in $x$ if 
\[\sigma _D > \sqrt{T} \sigma   \text{ and }   x \geq q(t) - 1,\] 
or 
\[\sigma _D < \sqrt{T} \sigma   \text{ and }   x \leq q(t) - 2,\]

\end{enumerate}
where \[q(t) := X_0  - \frac{\mu \sigma ^2 T + ( \frac{1}{2} \sigma^2-r)(t \sigma _D ^2 + T \sigma ^2 (T - t))}{\sigma _D ^2 - T \sigma ^2}.\]
\end{proposition}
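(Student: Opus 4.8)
The plan is to exploit the fact that, under the normal prior, the drift \eqref{A_normal} is affine in $x$, so the drive function \eqref{Gstock} has the simple form (product of a linear function and an exponential). First I would substitute \eqref{A_normal} into \eqref{Gstock} and write, for fixed $t$, $G_{stock}(t,x) = (\alpha(t)\, x + \gamma(t))\, e^x$, where the slope coefficient is
$\alpha(t) = \frac{\sigma_D^2 - T\sigma^2}{t\sigma_D^2 + T\sigma^2(T-t)}$
and $\gamma(t) = -r + \frac{\sigma^2}{2} + \frac{\mu\sigma^2 T - (\sigma_D^2 - T\sigma^2)X_0}{t\sigma_D^2 + T\sigma^2(T-t)}$ collects every term that does not depend on $x$. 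The crucial first observation is that the denominator $t\sigma_D^2 + T\sigma^2(T-t)$ is strictly positive for all $t\in[0,T]$, so $\mathrm{sign}\,\alpha(t) = \mathrm{sign}(\sigma_D^2 - T\sigma^2)$; this is exactly what splits the statement into the regimes $\sigma_D \gtrless \sqrt{T}\sigma$.

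Next I would differentiate twice in $x$. Writing $g(x) := (\alpha x + \gamma)e^x$, a direct computation gives $g'(x) = (\alpha(x+1)+\gamma)e^x$ and $g''(x) = (\alpha(x+2)+\gamma)e^x$. Since $e^x>0$, the slope of $G_{stock}$ is governed entirely by the linear factor $\alpha(x+1)+\gamma$ and its curvature by $\alpha(x+2)+\gamma$. Each factor has a single root, at $x = -1 - \gamma/\alpha$ and $x = -2 - \gamma/\alpha$ respectively. The key algebraic step is to verify that $-\gamma/\alpha = q(t)$: multiplying $\gamma/\alpha$ top and bottom by $t\sigma_D^2+T\sigma^2(T-t)$, the $(\tfrac12\sigma^2 - r)$ and $\mu\sigma^2 T$ terms line up with the numerator of $q(t)$ and the $X_0$ terms recombine to produce the leading $X_0$, reproducing the displayed formula for $q(t)$ exactly. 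Consequently the slope changes sign precisely at $x = q(t)-1$ and the curvature at $x = q(t)-2$.

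Finally I would split into the two regimes and read off signs, using $q(t)-2 < q(t)-1$ so the three intervals are ordered. When $\sigma_D > \sqrt{T}\sigma$ we have $\alpha>0$, hence $g'(x)<0$ for $x<q(t)-1$ and $g''(x)<0$ for $x<q(t)-2$; the regions $x\le q(t)-2$, $q(t)-2\le x\le q(t)-1$, $x\ge q(t)-1$ are then downward--concave, downward--convex, upward--convex, which are items (i), (ii), (iv) in this regime. When $\sigma_D < \sqrt{T}\sigma$ we have $\alpha<0$, every inequality flips, and the same three regions become upward--convex, upward--concave, downward--concave, matching (iv), (iii), (i). I expect no real obstacle beyond bookkeeping: the only delicate point is the identity $-\gamma/\alpha = q(t)$, where a dropped sign in the $X_0$ or $(\tfrac12\sigma^2-r)$ term would misplace the thresholds, so I would carry out that matching explicitly before collecting the cases.
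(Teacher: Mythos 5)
Your proposal is correct and takes essentially the same approach as the paper: the paper's proof likewise differentiates $G_{stock}$ twice in $x$, obtaining expressions equal to $(\alpha(t)(x+1)+\gamma(t))e^{x}$ and $(\alpha(t)(x+2)+\gamma(t))e^{x}$ in your notation, and reads off the signs of these affine factors, with the sign of $\sigma_D^2 - T\sigma^2$ (positivity of the denominator $t\sigma_D^2 + T\sigma^2(T-t)$ being implicit) splitting the two regimes at the thresholds $q(t)-1$ and $q(t)-2$. Your explicit verification of the identity $-\gamma(t)/\alpha(t) = q(t)$ is exactly the bookkeeping the paper leaves to the reader, and it checks out.
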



\begin{proposition}\label{normalt}
Suppose that  the trader's belief on log-price follows a normal distribution as in Example \ref{example2}. Then,  the drive function in \eqref{Gstock} is
\begin{enumerate}[(i)]
\item downward-sloping and concave in $t$ if 
\[\sigma _D < \sqrt{T} \sigma   \text{ and }   x \geq X_0 - \frac{\mu \sigma ^2 T}{\sigma _D ^2 - T \sigma ^2}   ,\] 

\item downward-sloping and convex in $t$ if 
\[\sigma _D > \sqrt{T} \sigma   \text{ and }  x \geq X_0 - \frac{\mu \sigma ^2 T}{\sigma _D ^2 - T \sigma ^2}, \]

\item upward-sloping and concave in $t$ if 
\[\sigma _D > \sqrt{T} \sigma   \text{ and }   x \leq X_0 - \frac{\mu \sigma ^2 T}{\sigma _D ^2 - T \sigma ^2},\] 

\item upward-sloping and convex in $t$ if 
\[\sigma _D < \sqrt{T} \sigma   \text{ and }   x \leq X_0 - \frac{\mu \sigma ^2 T}{\sigma _D ^2 - T \sigma ^2}.\] 
\end{enumerate}
\end{proposition}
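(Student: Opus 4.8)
The plan is to reduce the entire statement to the $t$-dependence of the drift $A(t,x)$. Since we fix $x$ and vary $t$, the factor $e^{x}$ in \eqref{Gstock} is a positive constant and the additive terms $-r+\tfrac12\sigma^{2}$ are constants in $t$, so $\partial_t G_{stock}=e^{x}\,\partial_t A$ and $\partial_{tt}G_{stock}=e^{x}\,\partial_{tt}A$. Because $e^{x}>0$, the monotonicity and convexity of $G_{stock}$ in $t$ coincide exactly with those of $A(t,x)$ in $t$, and it suffices to analyze $A$.

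First I would rewrite $A(t,x)$ from \eqref{A_normal} as a $t$-independent numerator over an affine denominator. Writing $k:=\sigma_D^{2}-T\sigma^{2}$ and $N:=k\,(x-X_0)+\mu\sigma^{2}T$, the numerator $N$ does not depend on $t$, while the denominator $\Delta(t):=t\sigma_D^{2}+T\sigma^{2}(T-t)=kt+T^{2}\sigma^{2}$ is affine in $t$. On $[0,T]$ one has $\Delta(t)=t\sigma_D^{2}+(T-t)T\sigma^{2}>0$, being a positive combination of positive terms, so $A=N/\Delta$ is smooth there and
\[
\partial_t A=-\frac{Nk}{\Delta(t)^{2}},\qquad \partial_{tt}A=\frac{2Nk^{2}}{\Delta(t)^{3}}.
\]

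The key observation, which yields clean sign conditions, is that with the threshold $q_0:=X_0-\frac{\mu\sigma^{2}T}{k}$ appearing in the statement one has the factorization $N=k\,(x-q_0)$. Substituting gives $\partial_t A=-k^{2}(x-q_0)/\Delta^{2}$ and $\partial_{tt}A=2k^{3}(x-q_0)/\Delta^{3}$. Since $\Delta>0$ and $k^{2}>0$, the slope sign is governed solely by $x$ versus $q_0$: $A$ is decreasing in $t$ iff $x>q_0$ and increasing iff $x<q_0$. The second-derivative sign equals $\mathrm{sign}(k^{3}(x-q_0))=\mathrm{sign}(k)\,\mathrm{sign}(x-q_0)$, so $A$ is convex in $t$ iff $k$ and $x-q_0$ share a sign, and concave otherwise. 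Translating $k>0\Leftrightarrow\sigma_D>\sqrt{T}\sigma$ and $k<0\Leftrightarrow\sigma_D<\sqrt{T}\sigma$ and matching each sign pattern reproduces cases (i)--(iv) directly; for instance $\sigma_D<\sqrt{T}\sigma$ with $x\ge q_0$ gives $k<0$, $x-q_0\ge 0$, hence decreasing and concave, which is case (i).

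I do not anticipate a genuine obstacle: the argument is an exact computation followed by sign bookkeeping. The only point requiring care is the four-fold case split, where the slope depends only on $\mathrm{sign}(x-q_0)$ while the convexity depends on the product $\mathrm{sign}(k)\,\mathrm{sign}(x-q_0)$; keeping these two dependencies straight is precisely what ties each hypothesis on $(\sigma_D,x)$ to the correct pair of conclusions. The boundary $x=q_0$ makes $N=0$, so $A\equiv 0$ in $t$ and both derivatives vanish, which is consistent with the weak inequalities in the statement.
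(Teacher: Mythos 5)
Your proof is correct and takes essentially the same route as the paper: the paper also differentiates directly in $t$, obtaining $\partial_t G_{stock} = -\frac{\sigma_D^2 - T\sigma^2}{t\sigma_D^2 + T\sigma^2(T-t)}\,e^x A(t,x)$ and $\partial^2_{tt} G_{stock} = 2\bigl(\frac{\sigma_D^2 - T\sigma^2}{t\sigma_D^2 + T\sigma^2(T-t)}\bigr)^2 e^x A(t,x)$, and concludes from the signs, which is the same computation as yours since $A=N/\Delta$ with $N=k(x-q_0)$. Your explicit factorization of the numerator and the resulting sign table merely spell out the bookkeeping the paper leaves to the reader with ``the results follow directly from the signs of these partial derivatives.''
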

The proof is provided in Section \ref{Proof_proposition5}. In addition, if $\sigma _D = \sqrt{T} \sigma$, then the drive function is independent of $t$. Under the normal distribution for $D$, $G_{stock} (t,x)$  is upward-slopping and convex if $r < \frac{\mu}{T} + \frac{1}{2}\sigma ^2$, and  downward-slopping and concave if $r  > \frac{\mu}{T} + \frac{1}{2}\sigma ^2$. However, for two-point discrete distribution and double exponential distribution, the properties of the drive function are not explicitly available, but the function can be numerically computed instantly using \eqref{Gstock}.

Figure \ref{boundaries} presents the optimal trading strategies for selling a stock with beliefs corresponding to the  two-point discrete distribution, normal distribution and double exponential distribution.  Typically, as seen in panels (a), (b) and (d), we observe a decreasing boundary over time. This means that the trader tends to sell the stock when the stock price is sufficiently high, but is willing to liquidate at a lower price as the trading deadline approaches. However, Figure \ref{boundary_CE_bernoulli}  shows that the optimal boundary is increasing in time under the two-point discrete distribution for $D$ (see Example \ref{example1}). Note that in Figure \ref{boundary_CE_bernoulli}, the trader has a more divergent belief in the sense that the terminal stock price will end up either very high or very low.  Under our model, this suggests that the trader will hold the stock if the price is high since the trader believes the  price will likely increase further, and the trader will sell it if the price is low because  the price is believed to go even lower under the trader's belief.   Figures \ref{boundary_EC_bernoulli} and \ref{boundary_CE_bernoulli} represent the trader's two contrasting  beliefs. The magnitude of the parameter  $\delta$ guides the trader's reaction to new price information and ultimately the trading strategy. This highlights that the behavior of the boundary can be significantly changed by the choice of distribution of $D$ and associated parameters.

In Figure \ref{para_sensitivity} we illustrate the sensitivities of the optimal strategies for selling stock with respect to a number of parameters under three different distributions for $D$. As expected, the boundary increases as the mean (left panels) of $D$ increases and variance decreases (right panels).


Due to the nonlinearity of the drive functions under certain distributions, the trader may need to adopt more complicated trading strategies. As  we have  shown in Figures \ref{boundaryf_delta08_p05},     if the  initial stock price is in the continuation region, then the trader does not sell  until the price  reaches the   exercise boundaries.  However, the continuation region is sandwiched by two upper and lower disconnected exercise regions. This means that not only when the price is sufficiently high but also sufficiently low, the trader will sell the stock. We also notice another continuation region when the stock price is very low. The intuition surrounding this continuation region is that when the current stock price is so low, the trader speculates the price will  rebound and  chooses to hold onto the stock. 

Figure \ref{boundaryf_la03_p05} suggests a  different trading strategy under the double exponential distribution for $D$. Since the lower continuation region is relatively narrow, for  most starting prices the trader will find it optimal to sell the stock   immediately. It is clear that trader's prior belief impacts her trading strategies. In Figure \ref{boundaryf_la04_p05}, the parameters $\lambda _1$ and $\lambda _2$ are relatively small, and the  mean and variance of $D$ are $0$ and variance $0.16$ (see \eqref{double_mean} and \eqref{double_var}). Compared to the  market view with the same mean but  smaller variance of value  0.09 in Figure \ref{boundaryf_la03_p05},   the upper continuation region is smaller (and vanishes for small $t$) and lower continuation region becomes larger in Figure \ref{boundaryf_la04_p05}.   

If the trader adopts the   normal distribution for $D$ with   mean zero and variance   $\sigma ^2 T$, then she will never update the log-price dynamics because the drift of $X$ is zero, i.e. $A(t,x) = 0$ $\forall (t,x)$. Other than this special degenerate case,  the trader can   take normal priors with different variances. We notice  that the  continuation/exercise regions are disconnected under the two-point discrete distribution and double exponential distribution in Figures \ref{boundaryf_delta08_p05} and   \ref{disconnected}, but  there is a unique exercise boundary under normal distribution.

\begin{figure}[h]
\centering
\begin{subfigure}{.49\textwidth}\centering
	\includegraphics[width=\textwidth]{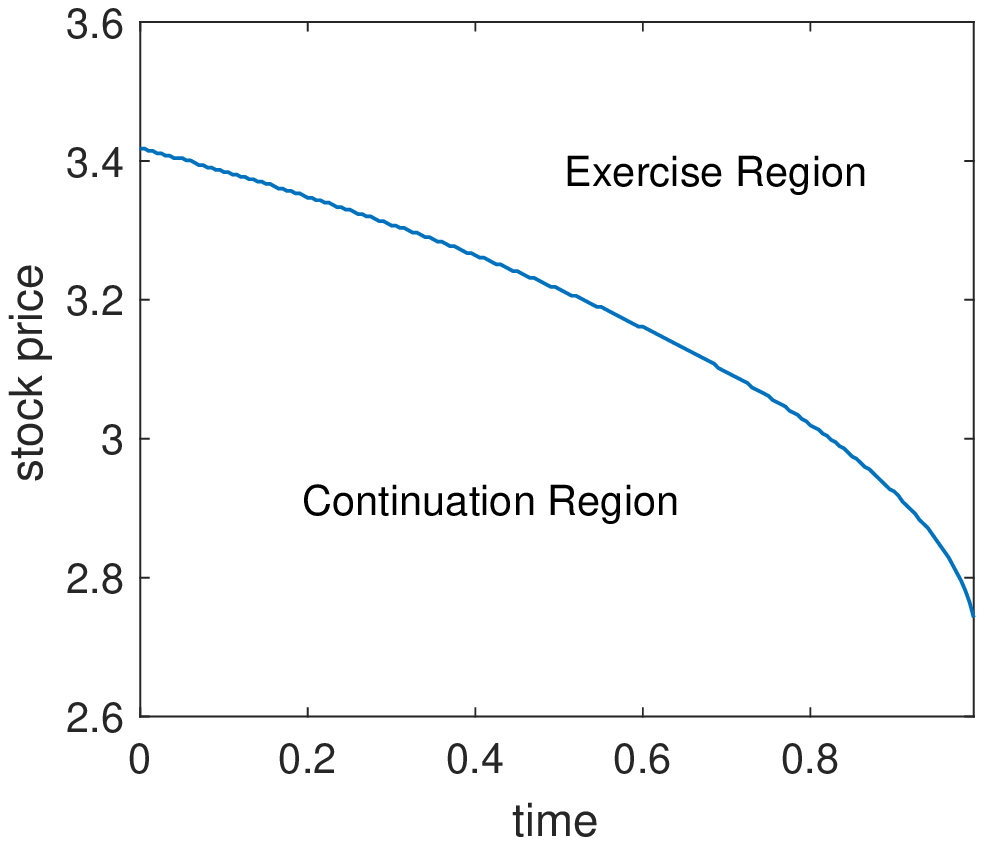}
\caption{}\label{boundary_EC_bernoulli}
\end{subfigure}\hfill
\begin{subfigure}{.49\textwidth}\centering
	\includegraphics[width=\textwidth]{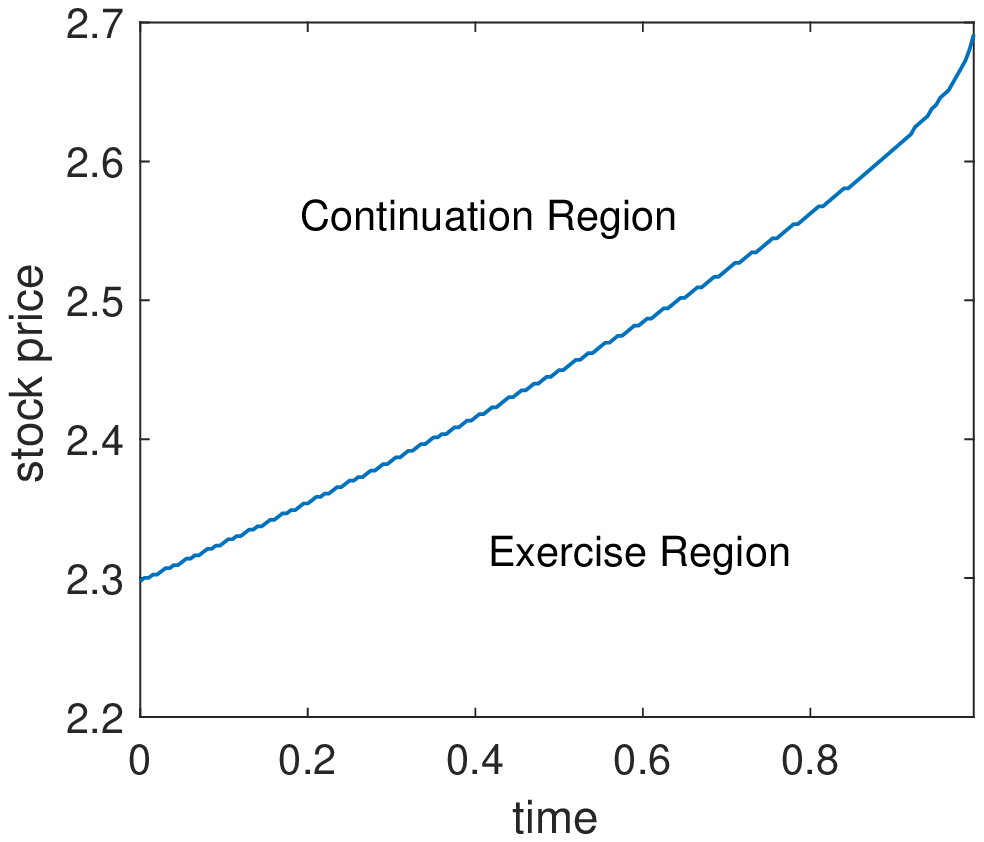}
\caption{}\label{boundary_CE_bernoulli}
\end{subfigure} \hfill
\begin{subfigure}{.49\textwidth}\centering
	\includegraphics[width=\textwidth]{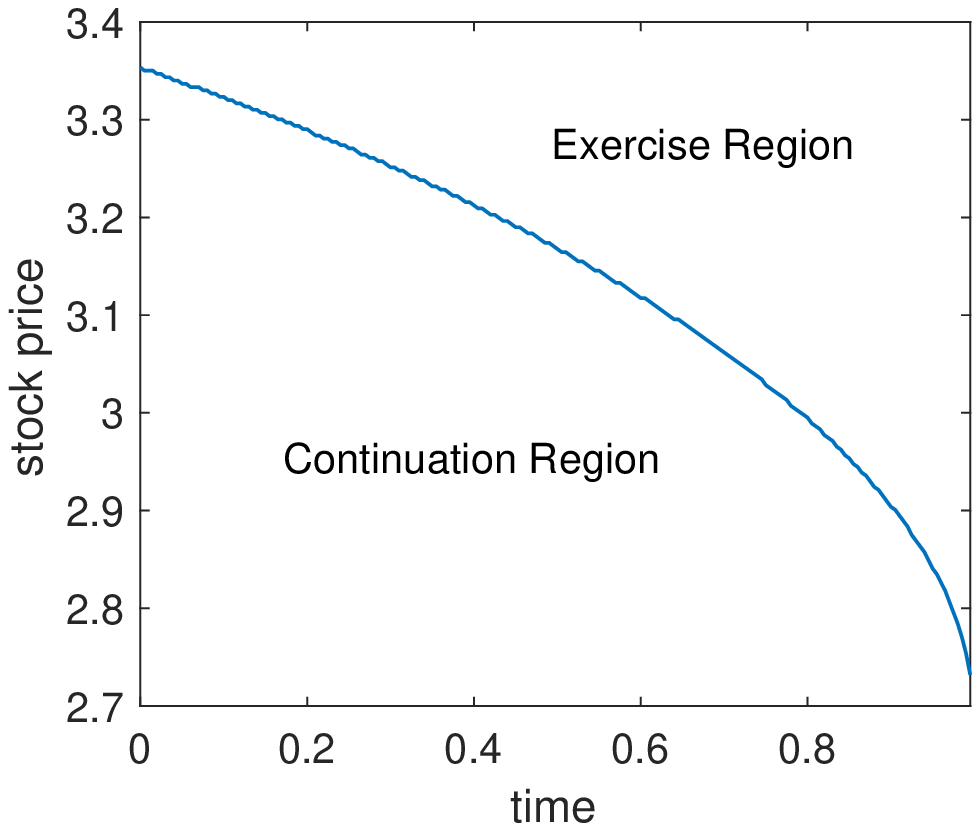}
\caption{}\label{boundary_EC_sigma02}
\end{subfigure}   \hfill
\begin{subfigure}{.49\textwidth}\centering
	\includegraphics[width=\textwidth]{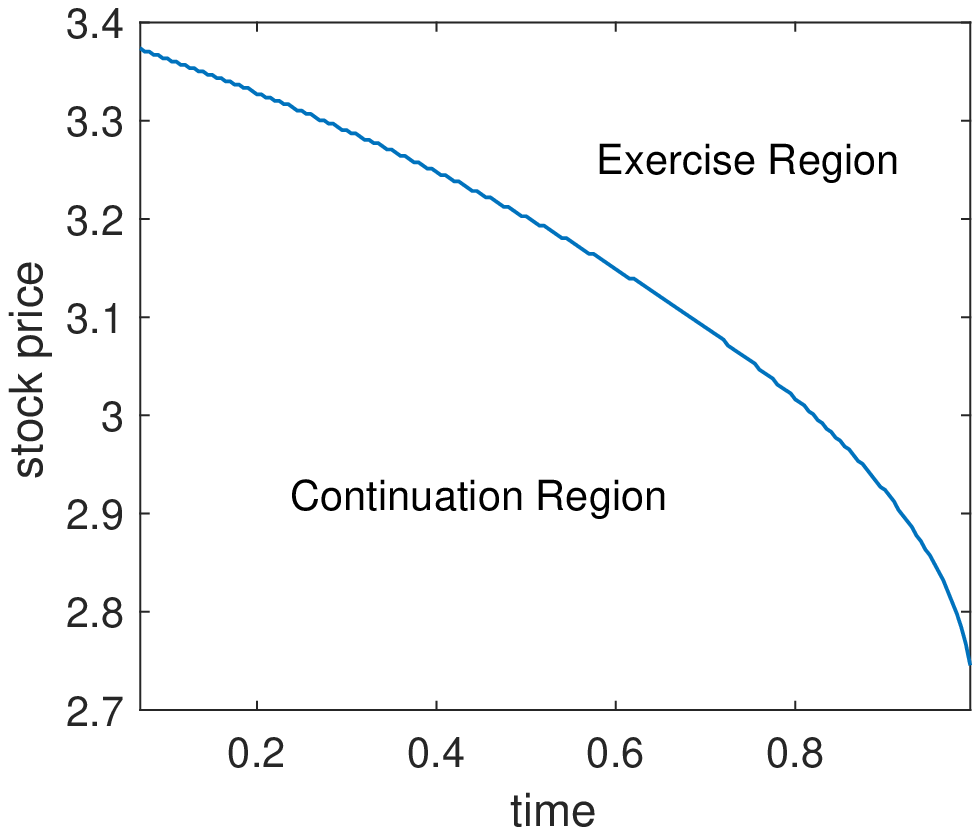}
\caption{}\label{boundary_EC_double}
\end{subfigure} 
\caption{Optimal boundaries for selling stock under two-point discrete distribution (a \& b), normal distribution (c) and double exponential distribution (d). Parameters for each plot as follows: (a) $\delta _u = -\delta _d = 0.1, p _u = p _d =0.5$; (b) $\delta _u = -\delta _d = 2, p _u = p _d =0.5$; (c) $\mu = 0, \sigma _D = 0.2$; (d) $\theta = 0, p_1 = p_2 = 0.5, \lambda _1 = \lambda _2 = 10$. Common parameters: $S_0 = 2.72 (X_0 = 1), r=0.1, \T=1, T = 1.1, \sigma = 0.4.$}\label{boundaries}
\end{figure}

\begin{figure}[h]
\centering
\begin{subfigure}{.47\textwidth}\centering
	\includegraphics[width=\textwidth]{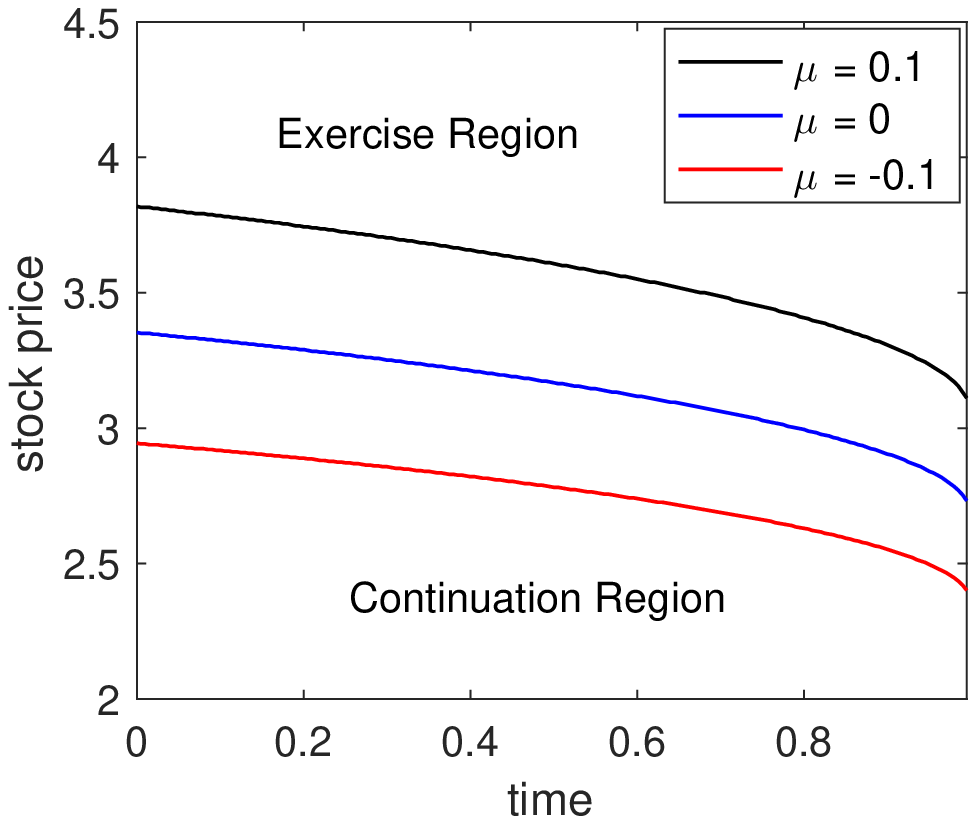}
\caption{}\label{mu_sigmaD_02}
\end{subfigure}\hfill
\begin{subfigure}{.47\textwidth}\centering
	\includegraphics[width=\textwidth]{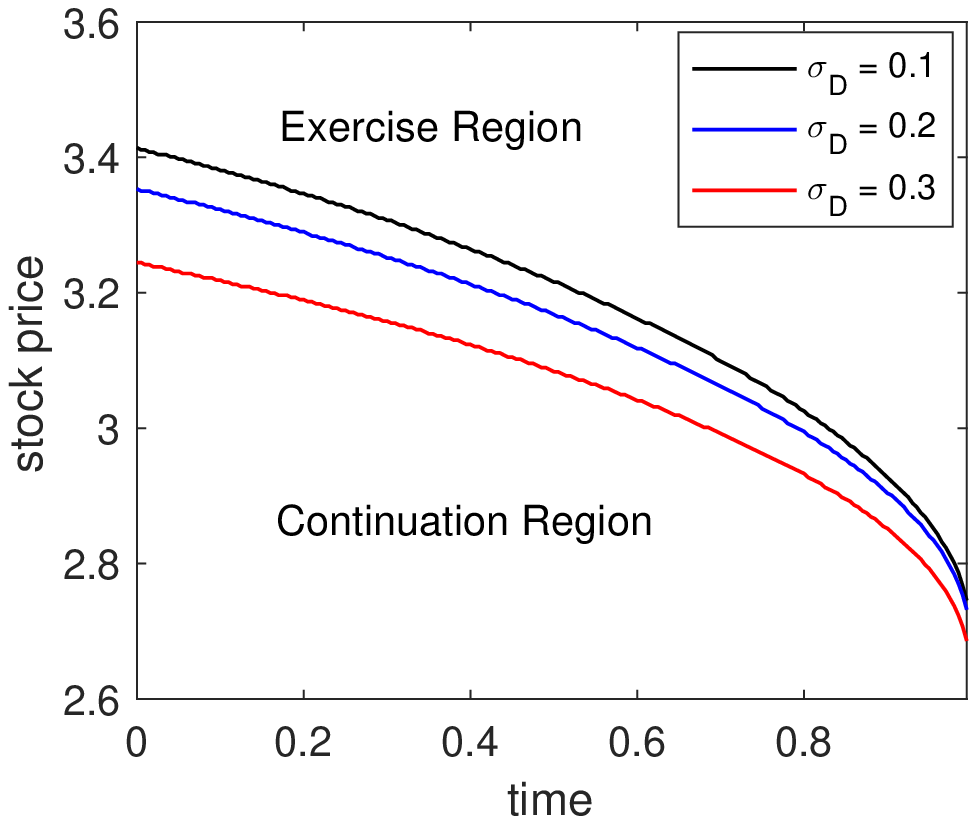}
\caption{}\label{mu_0_sigmaD}
\end{subfigure} \hfill
\begin{subfigure}{.47\textwidth}\centering
	\includegraphics[width=\textwidth]{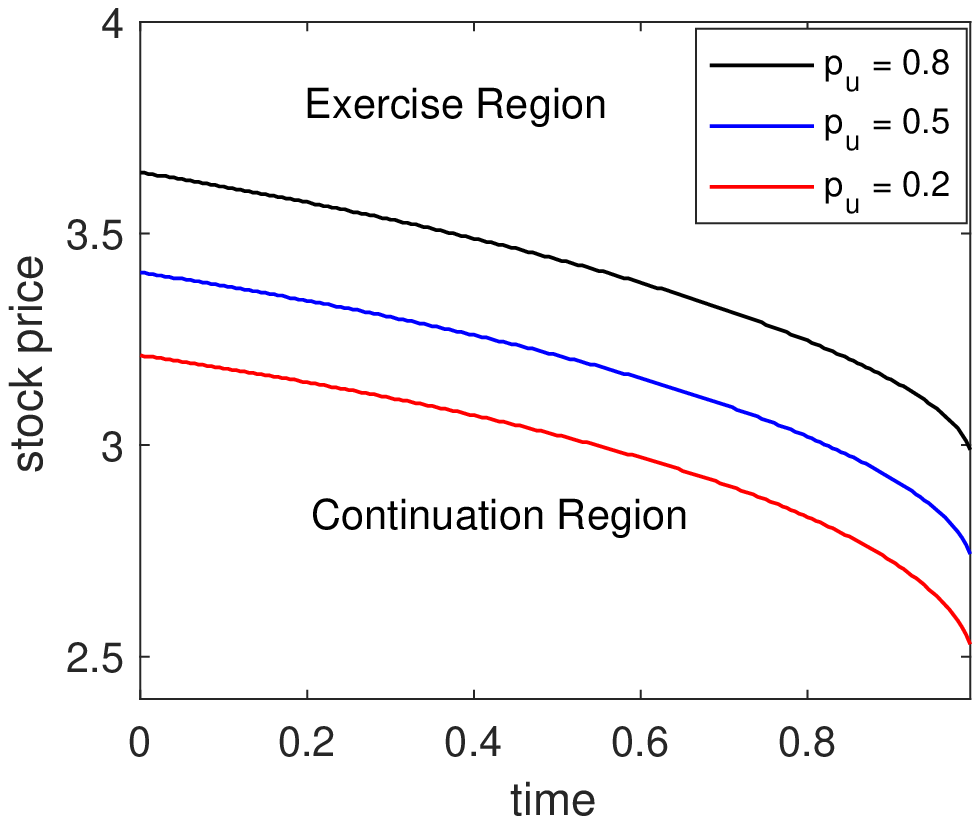}
\caption{}\label{puDu_01Dd_m01}
\end{subfigure}   \hfill
\begin{subfigure}{.47\textwidth}\centering
	\includegraphics[width=\textwidth]{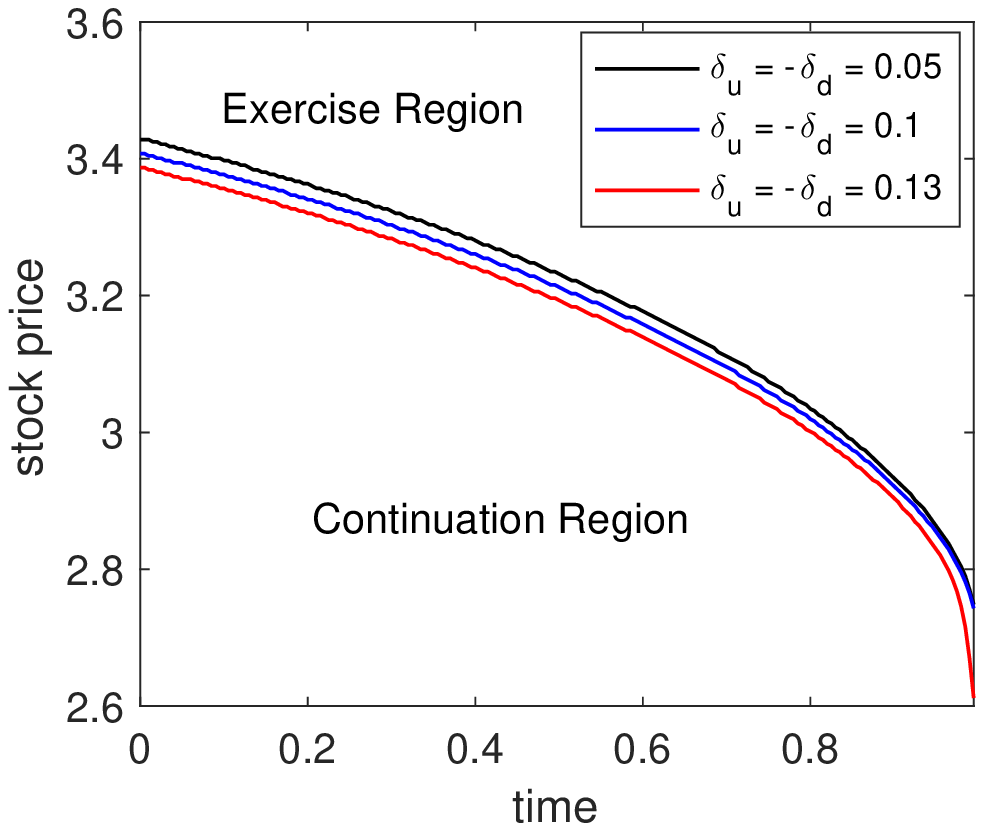}
\caption{}\label{pu_05pd_05D}
\end{subfigure} 
\begin{subfigure}{.47\textwidth}\centering
	\includegraphics[width=\textwidth]{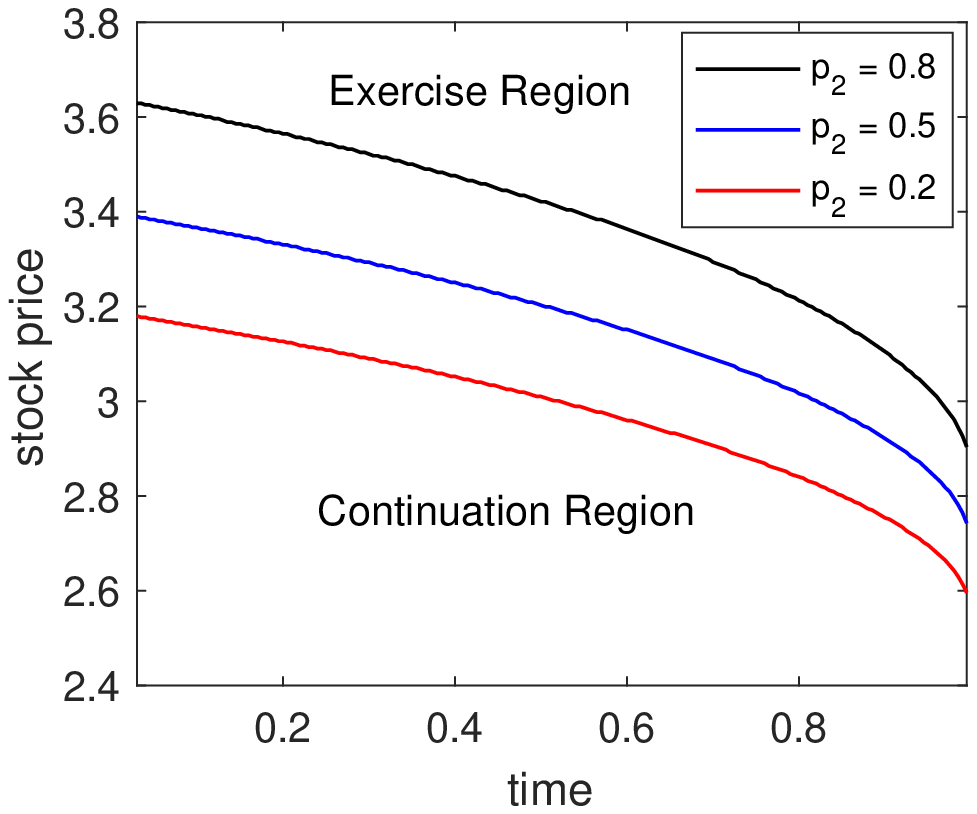}
\caption{}\label{plam1_10lam2_10}
\end{subfigure}\hfill
\begin{subfigure}{.47\textwidth}\centering
	\includegraphics[width=\textwidth]{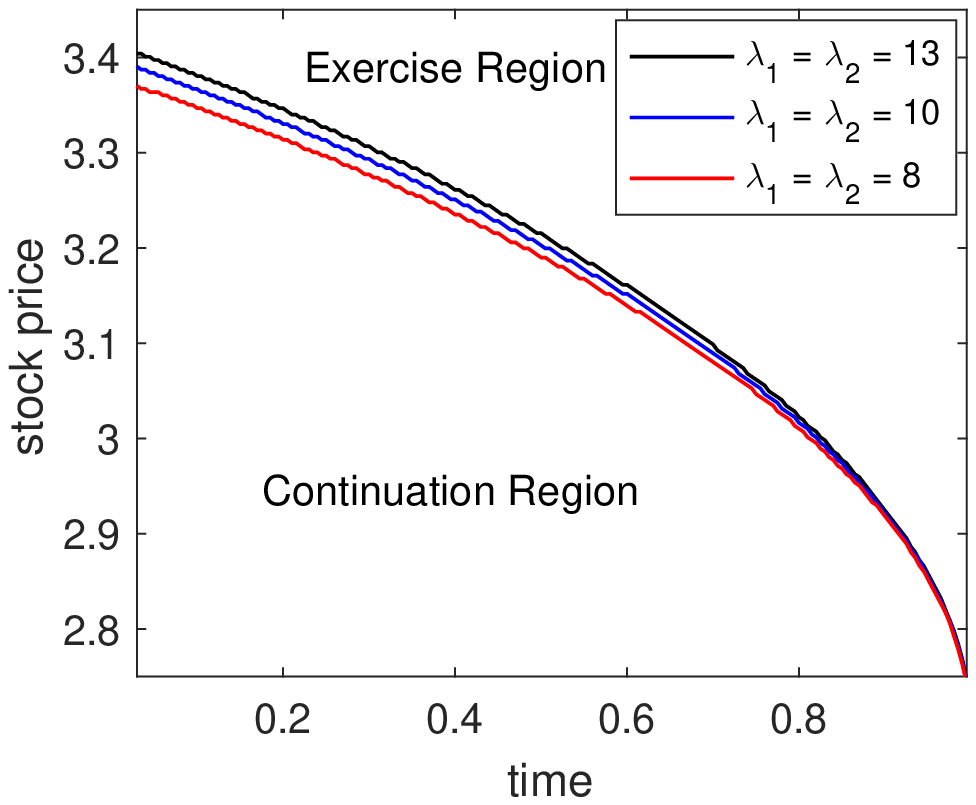}
\caption{}\label{p1_05p2_05lam}
\end{subfigure} \hfill
\caption{Optimal trading boundaries for  stock liquidation under the two-point discrete distribution (a \& b), normal distribution (c \& d) and double exponential distribution (e \& f). Parameters: (a) $\sigma _D = 0.2$; (b) $\mu = 0$; (c) $\delta _u = -\delta _d = 0.1$; (d) $p _u = p _d =0.5$; (e) $\lambda _1 = \lambda _2 = 10$; (f) $p_1 = p_2 = 0.5$. Other common parameters are same to Figure \ref{boundaries}.}\label{para_sensitivity}
\end{figure}

\begin{figure}[h]
\centering
	\includegraphics[width=3in]{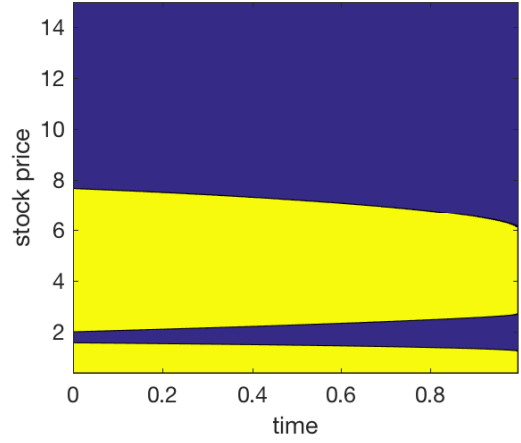}
\caption{Optimal trading regions for  stock liquidation under the two-point discrete distribution. The continuation regions are in yellow (light) color and the exercise regions are in blue (dark), and they are disconnected. Parameters: $\delta _u = -\delta _ d = 0.8, p_u = p_d = 0.5$. Common parameters: $X_0=1, r=0.1, \T =1, T = 1.1, \sigma = 0.4$. } \label{boundaryf_delta08_p05}
\end{figure}

\begin{figure}[h]
\centering
\begin{subfigure}{.49\textwidth}\centering
	\includegraphics[width=\textwidth]{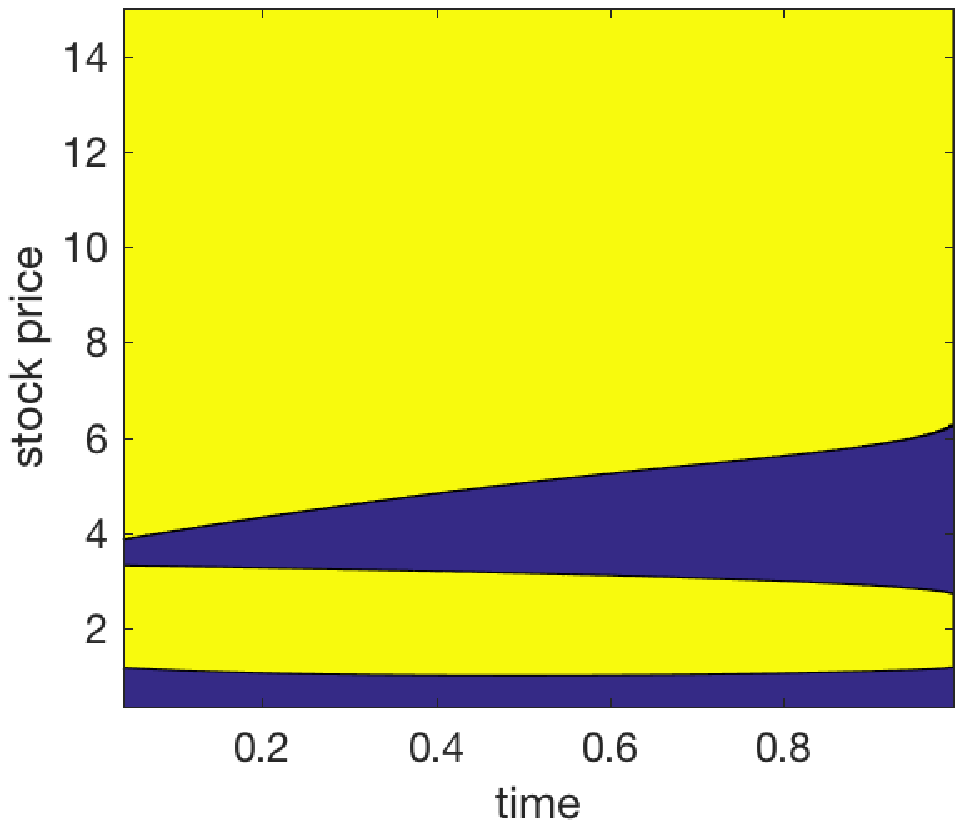}
\caption{}\label{boundaryf_la03_p05}
\end{subfigure}\hfill
\begin{subfigure}{.49\textwidth}\centering
	\includegraphics[width=\textwidth]{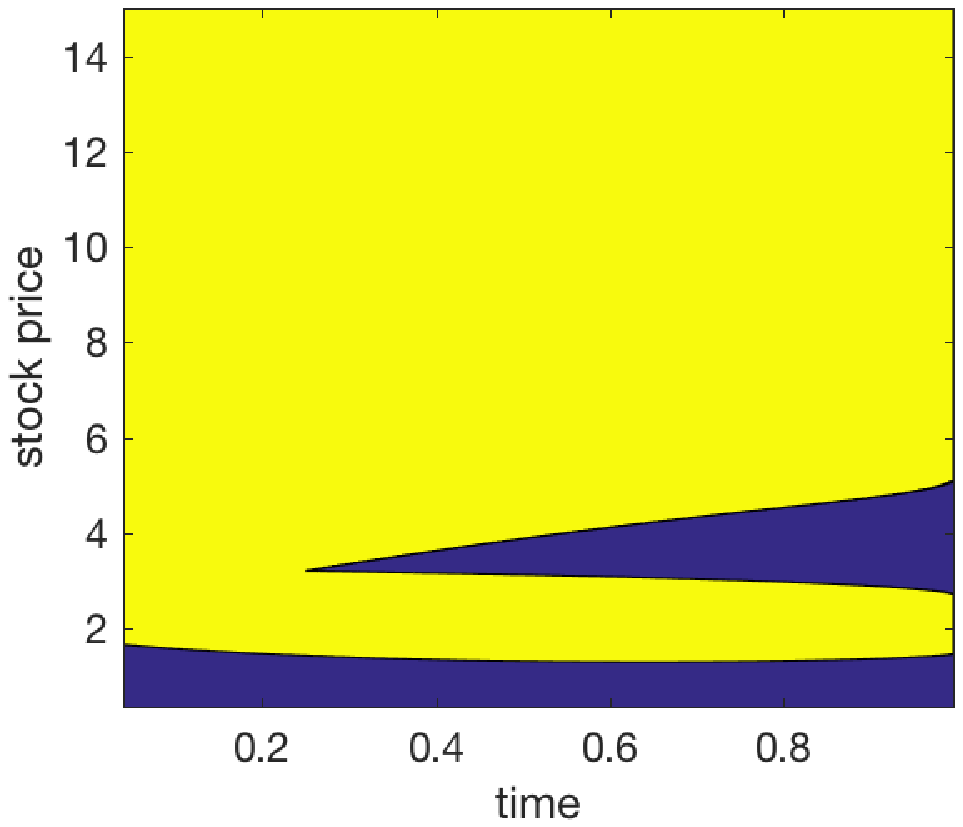}
\caption{}\label{boundaryf_la04_p05}
\end{subfigure} \hfill
\caption{Optimal trading regions for  stock liquidation under the double exponential  distribution. The continuation regions are in yellow (light) color and the exercise regions are in blue (dark), and they are disconnected.  Parameters for (a): $\theta = 0, \lambda _1 = \lambda _2 = 4.714, p_1 = p_2 = 0.5$; parameters for (b): $\theta = 0, \lambda _1 = \lambda _2 = 3.536, p_1 = p_2 = 0.5$. Common parameters: $X_0 = 1, r=0.1, \bar{T} =1, T = 1.1, \sigma = 0.4.$. }\label{disconnected}
\end{figure}

\clearpage

\subsection{Call and Put Options}
 
Given that the stock price follows SDE \eqref{LogPrice} under the historical measure $\mathbb{P}$, the risk-neutral counterpart follows the geometric Brownian motion 
\begin{align}
d S_t = rS_t \,dt +  \sigma S_t  \,dW^{\mathbb{Q}}_t, \end{align}
where $W^{\mathbb{Q}}_t$ is a standard Brownian motion under the risk-neutral measure $\mathbb{Q}$.  Therefore, the no-arbitrage prices of European call and put options are found from the Black-Scholes pricing formulae.  

The price of a European  call with strike price $K$ and expiration date $T$ is given by
\begin{equation}
C_{BS}(t,e^{x}) = \Phi (d_1) e^{x} - \Phi (d_2) Ke^{-r(T-t)}, \label{C_bs}
\end{equation}
for $(t,x) \in [0,T]\times \R$, where 
\begin{align}
d_1 &= \frac{1}{\sigma\sqrt{T-t}}\left(\text{ln}(e^{x}/K) + (r+\sigma^2/2)(T-t)\right), \label{d1} \\
d_2 &= d_1 - \sigma\sqrt{T-t}. \label{d2}
\end{align}
Substituting the above into \eqref{G}, we obtain the drive function

\begin{align}
G_{call}(t, x) &= -r C_{BS}(t, e^x) + \frac{\partial C_{BS} (t, e^x)}{\partial t} + \frac{\partial C_{BS} (t, e^x)}{\partial x} A(t, x) + \frac{\sigma ^2}{2} \frac{\partial ^2 C_{BS}(t, e^x)}{\partial x^2} \nonumber\\
 &= \left(-r + A(t,x) + \frac{\sigma ^2}{2} \right) e^x\Phi (d_1), \label{drive_call}
\end{align}
where $A(t, x)$ for the two-point, normal, double exponential cases are respectively computed by \eqref{A_twopoint}, \eqref{A_normal} and \eqref{A_double}. 
 

In  Figure \ref{boundary_call_sigma01_K100}, we illustrate  the optimal trading strategies for selling an European call under the normal distribution for $D$. As we can see, the trader tends to sell the call option when the stock price is high, but is willing to sell at a lower price as time approaches maturity. The green (light) line shows where  $G_{call} = 0$, and  the continuation region contains $G(t, x)\geq 0$. Recall that  $G(t,x)$ is the integrand in the expression for $L(t,x)$. Naturally, if $G(t, x) > 0$,  it is optimal for the trader to continue to hold   the call  because  she can obtain positive infinitesimal premium by waiting for an infinitesimally small amount of time.  The same argument also applies to the cases with a stock and a put option.

 Next, we consider the liquidation of a European put option.  The Black-Scholes put price is given by 
 \begin{align}
P_{BS}(t, e^{x}) = \Phi (-d_2)Ke^{-r(T-t)} - \Phi (-d_1)e^{x}.
\end{align}
where $K$ is the strike price, $T$ is the expiration date, and $d_1$ and $d_2$ are given by \eqref{d1} and \eqref{d2}.  Using \eqref{G}, the  corresponding drive function $G_{put}$ is
\begin{align}
G_{put}(t, x) = \left(r - A(t,x) - \frac{\sigma ^2}{2} \right) e^x\Phi (-d_1). \label{drive_put}
\end{align}
Figure \ref{boundary_put_sigma01_K100} presents the optimal timing strategies for selling an European put under normal distribution. The figure shows that the trader tends to sell the put option when the stock price is low (i.e. when the put option price is high), but is willing  to sell at a higher stock price (i.e. lower option price) as time approaches maturity. As a result, the exercise region expands and continuation region shrinks as time progresses.

\begin{figure}[h]
\centering
\begin{subfigure}{.49\textwidth}\centering
	\includegraphics[width=\textwidth]{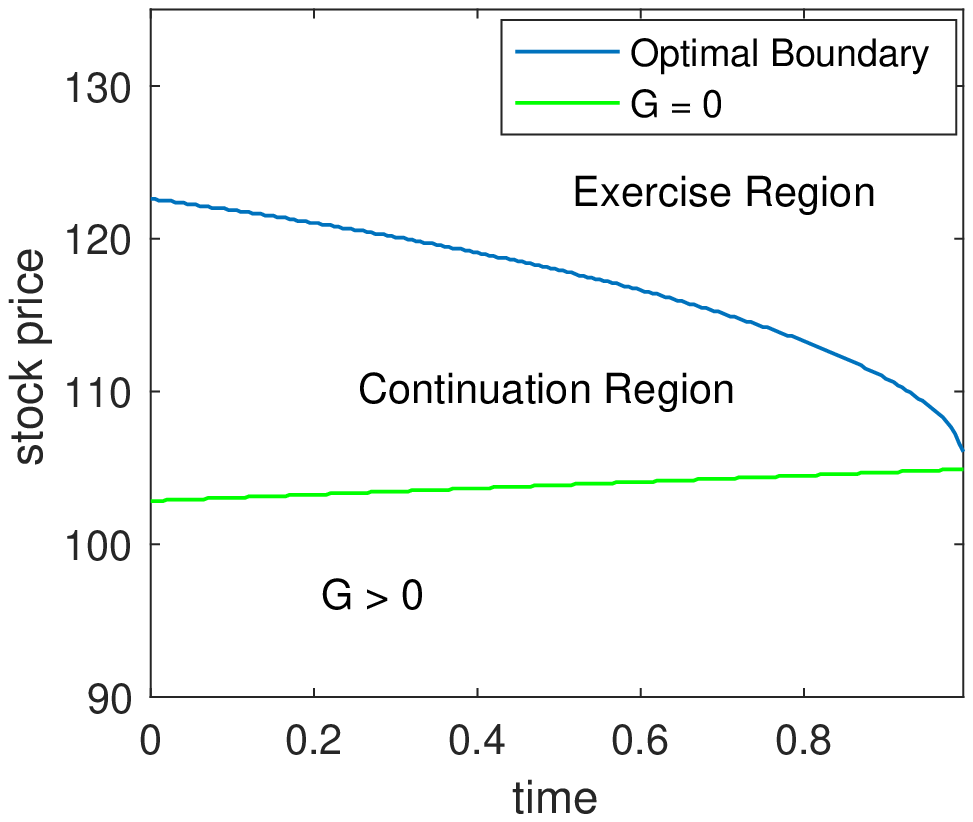}
\caption{}\label{boundary_call_sigma01_K100}
\end{subfigure}\hfill
\begin{subfigure}{.49\textwidth}\centering
	\includegraphics[width=\textwidth]{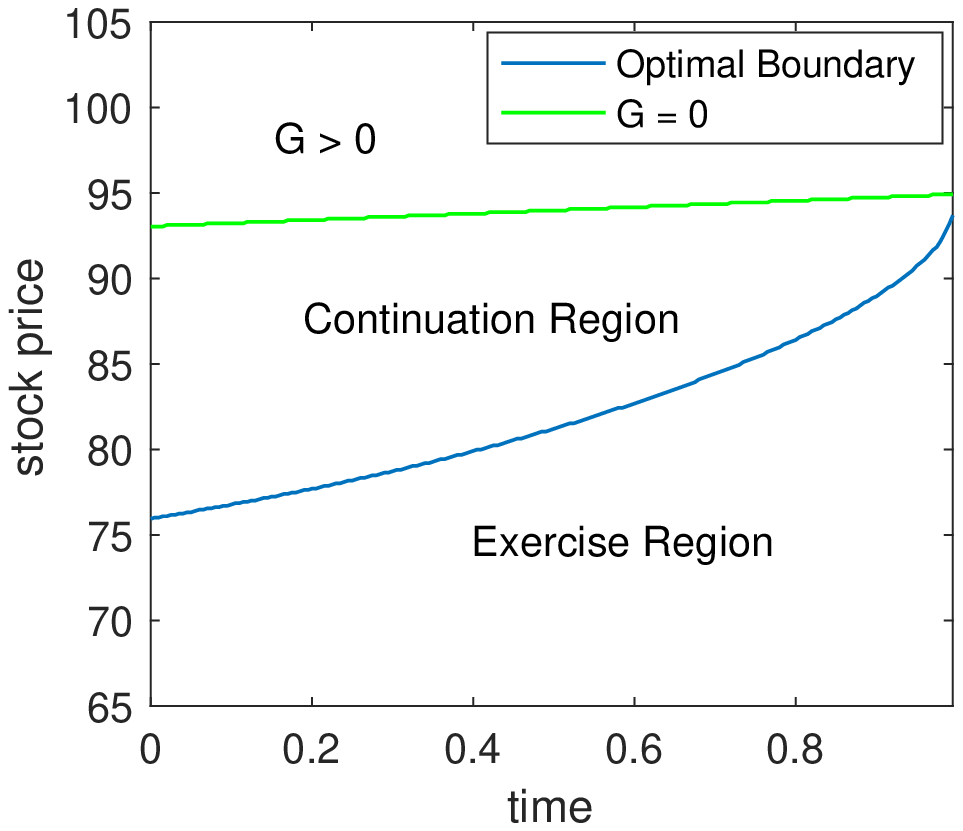}
\caption{}\label{boundary_put_sigma01_K100}
\end{subfigure} \hfill
\caption{Optimal boundaries for selling European call with $S_0 = 105$ (a) and put with $S_0 = 95$ (b) under normal distribution. The strike price $K = 100$ and maturity $T=1$, and the other parameters: $\mu = 0, \sigma _D = 0.1, r=0.1, \T =1, \sigma = 0.4$. The green line is defined by $G = 0$, and continuation region contains $G\geq 0$.}\label{boundaries_option}
\end{figure}

By direct calculation using the drive functions \eqref{drive_call} and \eqref{drive_put} for a call and a put, and comparing with the drive function  \eqref{Gstock}  for   the underlying asset, we arrive at the following parity result.

\begin{lemma}  (Call-Put  Parity) Consider a pair of  European call and put options with the same underlying $S$, strike $K$ and maturity $T$. Under model  \eqref{LogPrice} with the same distribution for $D$, the associated  drive functions satisfy
\begin{align}
G_{call}(t, x) - G_{put}(t, x)  = G_{stock} (t,x), \label{relation}
\end{align}for all $(t,x) \in [0,T]\times \mathbb{R}$.
\end{lemma}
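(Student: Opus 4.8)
The plan is to exploit the linearity of the drive-function map together with classical put-call parity, rather than grinding through the explicit formulas. Observe first that the drive function in \eqref{G} is exactly the operator $\L$ from \eqref{Lop} applied to the reward function, i.e. $G = \L f$. Since $\L$ acts linearly on $f$, I would write
\begin{align}
G_{call}(t,x) - G_{put}(t,x) = \L C_{BS}(t,e^x) - \L P_{BS}(t,e^x) = \L\bigl(C_{BS} - P_{BS}\bigr)(t,e^x). \nonumber
\end{align}

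Next I would invoke the Black--Scholes put-call parity $C_{BS}(t,e^x) - P_{BS}(t,e^x) = e^x - Ke^{-r(T-t)}$, which itself follows from \eqref{C_bs} and the put formula together with the complementarity identity $\Phi(d) + \Phi(-d) = 1$. The problem then reduces to evaluating $\L$ on the two terms on the right. Applying $\L$ to the stock reward $e^x$ reproduces precisely $G_{stock}(t,x) = \bigl(-r + A(t,x) + \tfrac{1}{2}\sigma^2\bigr)e^x$ as in \eqref{Gstock}. Applying $\L$ to the deterministic discounted-strike term $Ke^{-r(T-t)}$ gives zero: it carries no $x$-dependence, so its first and second $x$-derivatives vanish, while its time derivative exactly cancels the discounting term, $-r\,Ke^{-r(T-t)} + \partial_t\bigl(Ke^{-r(T-t)}\bigr) = 0$. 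Hence $\L(C_{BS}-P_{BS}) = G_{stock} - 0 = G_{stock}$, which is \eqref{relation}.

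There is no substantive obstacle here; the identity is structural, reflecting that a risk-free bond lies in the kernel of $\L$ because it earns exactly the discount rate. The only points requiring care are tracking the sign convention in $G_{put}$ from \eqref{drive_put} and correctly locating the discounted strike in the kernel of $\L$. As a sanity check, one may also verify \eqref{relation} by direct algebra from \eqref{Gstock}, \eqref{drive_call}, and \eqref{drive_put}: factoring out the shared coefficient $(-r + A + \tfrac{1}{2}\sigma^2)e^x$ and again using $\Phi(d_1) + \Phi(-d_1) = 1$ collapses $G_{call} - G_{put}$ to $G_{stock}$ in a single line.
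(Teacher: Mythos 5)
Your proof is correct, but it takes a structurally different route from the paper. The paper's argument (stated just above the lemma as ``by direct calculation'') is exactly your sanity check: subtract the explicit formulas \eqref{drive_call} and \eqref{drive_put}, factor out $\bigl(-r + A(t,x) + \tfrac{1}{2}\sigma^2\bigr)e^x$, and collapse via $\Phi(d_1)+\Phi(-d_1)=1$ to recover \eqref{Gstock}. Your main argument instead observes that $G = \L f$ with $\L$ the linear operator \eqref{Lop}, invokes classical Black--Scholes put-call parity $C_{BS}(t,e^x)-P_{BS}(t,e^x)=e^x-Ke^{-r(T-t)}$, and notes that the discounted strike lies in the kernel of $\L$ while $\L e^x = G_{stock}$; all three steps check out (in particular $-rKe^{-r(T-t)}+\partial_t Ke^{-r(T-t)}=0$ is right, and the parity identity does follow from \eqref{C_bs} and the put formula via $\Phi(d)+\Phi(-d)=1$). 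What your route buys is generality and explanation: it makes clear that \eqref{relation} has nothing to do with the specific Black--Scholes formulas or the distribution of $D$ --- any two rewards differing by $e^x$ plus an element of $\ker \L$ yield drive functions differing by $G_{stock}$, which is the structural content behind the remark after Proposition \ref{parity} that the strategy is independent of strike and maturity. What the paper's one-line algebra buys is economy, since the explicit drive functions are already in hand. The only point worth flagging is cosmetic: both your operator argument and the paper's formulas degenerate at $t=T$ (where $d_1$ is undefined and $C_{BS}$ is not $C^2$ at the strike), so the identity on the closed interval $[0,T]$ should really be read as holding on $[0,T)$ with the boundary case interpreted by continuity; this limitation is inherited from the paper's own statement, not a defect of your argument.
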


 \begin{proposition} \label{parity}
Under the same distribution of $D$, consider  European-style call and put options with the same strike and maturity. 
 The optimal strategy to liquidate a long-call-short-put position is identical to the optimal strategy to sell the underlying stock under the same trading horizon.      \end{proposition}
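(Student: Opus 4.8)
The plan is to reduce the proposition to the Call-Put Parity identity \eqref{relation} by exploiting the fact that, within this framework, the entire optimal stopping structure---the delayed liquidation premium, the optimal liquidation time, and the exercise/continuation regions---depends on the reward function only through its drive function. First I would observe that liquidating a long-call-short-put position corresponds to the reward $f_{LCSP}(t,x) = C_{BS}(t,e^x) - P_{BS}(t,e^x)$. Since the operator in \eqref{G} that produces the drive function is linear in $f$, the drive function of this position is $G_{LCSP}(t,x) = G_{call}(t,x) - G_{put}(t,x)$, which by the Call-Put Parity lemma \eqref{relation} equals $G_{stock}(t,x)$ for every $(t,x)\in[0,T]\times\R$, hence on the whole trading region $[0,\T]\times\R$.

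Next I would invoke the premium representation \eqref{L}, which expresses the delayed liquidation premium purely as a functional of the drive function,
\begin{align}
L(t,x) = \sup_{\tau \in \setT_{t,\bar{T}}} \E\left\{ \int_t^\tau e^{-r(u-t)} G(u, X_u)\,du \,|\, X_t = x \right\}. \notag
\end{align}
Because the supremum is taken over the \emph{same} admissible set $\setT_{t,\bar{T}}$ in both problems (the common trading horizon) and the integrand $G$ coincides, the two premia are identical, $L_{LCSP}(t,x) = L_{stock}(t,x)$ for all $(t,x)$. The optimal liquidation time and the sets $\mathcal{S}$, $\mathcal{D}$ are then pinned down solely by where the premium vanishes, via \eqref{Ltau}, so equality of the premia forces the two stopping rules to coincide.

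The main obstacle---indeed the only conceptual point---is to justify that the stopping structure is insensitive to the payoff itself and sees $f$ only through $G$. This is not visible in the original formulation \eqref{OptimalProb}, where $V$ manifestly depends on $f$; the decoupling emerges only after the It\^o rearrangement \eqref{VG}--\eqref{L}, which isolates $f(t,x)$ as an additive term that is constant with respect to the optimization over $\tau$ and leaves the maximization acting on the $G$-integral alone. I would stress that this is precisely why the premium $L$, rather than the value function $V$, is the correct object for the comparison: the two positions have different value functions---differing by $f_{LCSP}-e^{x} = -Ke^{-r(T-t)}$, consistent with classical put-call parity---yet share the same drive function and hence the same premium and the same optimal timing. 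Finally, I would note that the argument invokes no feature of the normal, two-point, or double exponential specifications, so the timing parity holds for an arbitrary distribution of $D$.
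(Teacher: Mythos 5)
Your proposal is correct and takes essentially the same route as the paper's proof: both identify the delayed liquidation premium of the long-call-short-put position via its drive function $G_{call}-G_{put}$, invoke the Call-Put Parity lemma \eqref{relation} to replace it with $G_{stock}$ in the representation \eqref{L}, and conclude via \eqref{Ltau} that the stopping rules coincide. You merely make explicit what the paper leaves implicit---the linearity of $f \mapsto G$ in \eqref{G} and the fact that the optimal timing depends on $f$ only through $L$---along with the nice sanity check that the value functions differ by the deterministic amount $-Ke^{-r(T-t)}$.
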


\begin{proof}The delayed liquidation premium associated with the long-call-short-put position is given by
\begin{align}
L_{call-put}(t, x) &=\sup_{\tau \in\setT_{t,\bar{T}}}{\E} \left\{\int_t^\tau e^{-r(u-t)} \big(G_{call}(u, X_u)- G_{put}(u, X_u)\big)\,du\, |\, X_t = x \right\}\notag\\
&=\sup_{\tau \in\setT_{t,\bar{T}}}{\E} \left\{\int_t^\tau e^{-r(u-t)} G_{stock}(u, X_u)\,du \,|\, X_t = x \right\}
 \label{Optimal_Parity}
\end{align}
The last term is delayed liquidation premium associated with selling the stock $S$. 
\end{proof}

As an interesting consequence of  Proposition \ref{parity}, consider a pair of call and put with strike $K_1$ with maturity $T_1$, and  another pair with strike $K_2$ and maturity $T_2$, with $\bar{T}\le \min\{T_1,T_2\}$. The respective long-call-short-put positions will yield the same optimal timing strategy. This strategy is identical to that of selling the underlying stock $S$, which is independent of strike and maturity.

\section{Numerical Implementation}\label{sect-finitedifference}
We now summarize  the numerical scheme used to solve the  variational inequality satisfied by the value function  for the optimal trading problem \eqref{OptimalProb}. A finite difference scheme is used  to solve for the optimal liquidation boundaries. 
%

The numerical solution of the system of variational inequalities can be obtained by applying a finite-difference scheme with the use of the Projected-Successive-Over-Relaxation (PSOR) method. The solution of the resulting equations for value functions are solved by the Successive Over Relaxation (SOR) method. We refer to  \citet[Chap.9]{wilmottbook1995} for a detailed discussion on the projected SOR method.  In each SOR iterative step in finding the numerical approximation of the value functions, we simply take the maximum value between the approximated function value and asset payoff. Then the variational inequality \eqref{VIV} admits the    linear complimentarity form:
\begin{align}
\begin{cases} 
\begin{split}
\L V(t, x) \leq 0, \enspace V(t, x) & \geq f(t, x),  \quad (t, x) \in [0,\T) \times \R_+,  \\ 
\\ (\L V(t, x)) (f (t,x) - V (t, x)) &= 0,  \quad (t, x) \in [0,\T) \times \R_+,\\
\\ V (\T, x) &= f (\T, x),  \quad x \in \R_+.
\end{split}	
\end{cases} \label{VIg}				
\end{align}

We now consider the discretization of the partial differential equation $ \L V(t,x) =0$, over a bounded uniform grid with discretizations in  time ($\delta t =  {\T}/{N}$), and space    ($\delta x =  {(X_{\max} - X_{\min})}/{M})$, where $X_{\max}$ and $X_{\min}$ are the upper and lower bounds on the values of $x$ on the grid.  Applying  the Crank-Nicolson method for $x$-derivatives and backward difference for $t$-derivatives on the resulting equation leads to the grid equation:
\begin{align}
-\alpha_{i,j-1}V_{i-1,j-1} + (1 - \beta )V_{i,j-1} - \gamma_{i,j-1}V_{i+1,j-1} = \alpha_{i,j} V_{i-1,j} + (1 + \beta) V_{i,j} + \gamma_{i,j} V_{i+1,j}
\end{align}
where the coefficients are given by

\begin{align}
\begin{cases} 
\begin{split}
\alpha_{i,j} &= \delta t \left( \frac{\sigma^2}{4(\delta x)^2} - \frac{A(t_j,x_{i-1}) + A(t_j,x_{i+1})}{8\delta x} \right),  \\ 
\\ \beta &= - \frac{\delta t}{2} \left(r + \frac{\sigma^2}{(\delta x)^2} \right),\\
\\ \gamma_{i,j} &= \delta t \left( \frac{\sigma^2}{4(\delta x)^2} + \frac{A(t_j,x_{i-1}) + A(t_j,x_{i+1})}{8\delta x} \right),
\end{split}	
\end{cases} \label{coef_RS}		
\end{align}
for $i=1,2,...,M-1$ and $j=1,2,...,N-1$. The system to be solved  backward in time is 
\begin{align}
\mathbf{M_{j-1} ^ 1 V_{j-1}=r_j},\label{M1g}
\end{align}
where the right-hand side is 
\begin{align}
\mathbf{r_j = M_j ^2 V_j} +  \begin{bmatrix} \alpha_{1, j-1} V_{0, j-1}+\alpha_{1, j} V_{0, j} \\ 0 \\ \vdots \\0 \end{bmatrix} +  \begin{bmatrix}  0 \\ \vdots \\0 \\ \gamma_{M-1, j-1} V_{M, j-1}+\gamma_{M-1, j} V_{M, j} \end{bmatrix},
\end{align}
and
\begin{align}
\mathbf{M_{j} ^1} &= \left[ \begin{array}{cccccc}
1- \beta  & -\gamma_{1, j} & & & \\
-\alpha _{2, j} & 1- \beta & -\gamma_{2, j} & & \\
& -\alpha _{3, j} & 1- \beta & -\gamma_{3, j} & \\
& & \ddots & \ddots & \ddots \\
& & &- \alpha_{M-2, j} & 1- \beta & -\gamma_{M-2, j} \\
& & & & - \alpha_{M-1,j} & 1- \beta  \end{array} \right],\\
 \nonumber\\
\mathbf{M_{j} ^2} &= \left[ \begin{array}{cccccc}
1+ \beta & \gamma_{1,j} & & & \\
\alpha _{2, j} & 1+ \beta & \gamma_{2, j} & & \\
& \alpha _{3, j} & 1+ \beta  & \gamma_{3, j} & \\
& & \ddots & \ddots & \ddots \\
& & & \alpha_{M-2, j} & 1+ \beta  & \gamma_{M-2, j} \\
& & & & \alpha_{M-1, j} & 1+ \beta \end{array} \right],\\
 \nonumber\\
\mathbf{V_j} &=\begin{bmatrix} V_{1,j}, V_{2,j}, \hdots , V_{M-1,j} \end{bmatrix} ^T.
\end{align}
This  leads to a sequence of stationary complementarity problems. Since the trader can close her position   anytime prior to expiry, the value function $V(t, x)$ must satisfy the constraint
\begin{align}
V(t,x) \geq f (t,x),  \quad X_{\min} \leq x \leq X_{\max},\quad  0 \leq t \leq \T.
\end{align}
Correspondingly,  the discrete scheme can be written as 
\begin{align}
V_{i, j} \geq f_{i,j}, \quad 0 \leq i \leq M, \quad  0 \leq j \leq N.
\end{align}
Hence, at each time step $j \in \left\{1, 2, \hdots, N-1\right\}$,  we need to solve the linear complementarity problem
\begin{align}
\begin{cases} 
\begin{split}
\mathbf{M_{j-1} ^1 V_{j-1}} & \geq \mathbf{r_j}, \\
\\ \mathbf{V_{j-1}} & \geq \mathbf{f_{j-1}},   \\ 
\\ (\mathbf{M_{j-1} ^1 g_{j-1}} - \mathbf{r_j})^T (\mathbf{f_{j-1}} - \mathbf{g_{j-1}}) &= 0.  
\end{split}	
\end{cases} 			
\end{align}
In particular, our algorithm enforces the constraint of dominating the payoff function   as follows:
\begin{align}
V_{i,j-1}^{new}=\max \big\{V_{i, j-1} ^{old}, f_{i,j-1}\big\}.
\label{iterative}
\end{align}

The   projected SOR method is used to solve the linear system. Notice that the constraint is enforced at the same time as  $V_{i,j-1}^{(k+1)}$ is calculated in each iteration. Hence, at each time step $j$, the PSOR algorithm is to iterate (on $k$) the equations
\begin{align*} 
\begin{split}
V_{1, j-1}^{(k+1)} &= \max \big\{f_{1,j-1} \,,\, V_{1, j-1}^{(k)} + \frac{\omega}{1-\beta} [r_{1, j}-(1-\beta) V_{1, j-1}^{(k)}+\gamma_{1, j-1} V_{2, j-1}^{(k)}] \big\},\\
V_{2, j-1}^{(k+1)} &= \max \big\{f_{2,j-1} \,,\, V_{2,j-1}^{(k)} + \frac{\omega}{1-\beta} [r_{2, j}+\alpha_{2, j-1} V_{1, j-1}^{(k+1)}-(1-\beta) V_{2, j-1}^{(k)}+\gamma_{2, j-1} V_{3,j-1}^{(k)}] \big\},\\
\vdots\\
V_{M-1,j-1}^{(k+1)} &= \max \big\{f_{M-1,j-1} \,,\, V_{M-1,j-1}^{(k)} + \frac{\omega}{1-\beta} [r_{M-1, j}+\alpha_{M-1,j-1} V_{M-2,j-1}^{(k+1)}-(1-\beta) V_{M-1,j-1}^{(k)}] \big\},
\end{split}
\label{PSOR}					
\end{align*}
where $k$ is the iteration counter and $\omega$ is the overrelaxation parameter.
The iterative scheme starts from an initial point $\mathbf{V_{j-1} ^{\mathrm{(0)}}}$ and proceeds until a convergence criterion is met, such as
$|| \mathbf{V_{j-1} ^{\mathit{(k+\mathrm{1})}}} - \mathbf{V_{j-1} ^{\mathit{(k)}}} || < \epsilon ,$ where $\epsilon$ is a tolerance parameter.  The  optimal boundary is identified by  locating the grid points separating the exercise (or sell) region and continuation region, respectively defined by   $\{(t,x)\,:\, V(t,x)=f(t,x)\}$ and   $\{(t,x)\,:\, V(t,x) > f(t,x)\}$.

\section{Concluding Remarks}
We have studied the optimal timing to sell an asset or option when the underlying asset's log-price follows a randomized Brownian bridge. By incorporating the trader's prior belief on the terminal stock price and allowing it to be updated with new price information, this approach can shed light on the effect of belief on the optimal trading strategy.  We have explicitly derived  the price dynamics under different beliefs (e.g. two-point discrete, double exponential, and normal distributions), and analyzed the properties of the associated delayed liquidation premia via the drive functions. By numerically solving the underlying variational inequality, we obtain the optimal trading strategies expressed in terms of sell/hold boundaries or regions. In particular, we find that the optimal strategy for liquidating a stock can  admit connected or disconnected continuation/exercise regions under the two-point discrete distribution or double exponential distribution depending on the parameters values.

For future research,  a natural direction is to consider under  the current  stochastic framework  more sophisticated timing strategies for options or  other derivatives, such as futures (see \cite{LeungLiLiZheng2015}) and credit derivatives (see \cite{LeungLiu2012}). Other extensions include introducing additional stochastic factors that incorporate stochastic volatility and jumps to the Brownian bridge, and/or considering alternative distributions for the trader's prior belief.

\section{Proofs} \label{proofs}
In this section, we  provide the derivations for $a(t, x)$ given in Examples 2 (normal distribution) and 3 (double exponential distribution), respectively, in Sections  \ref{CompNormal} and \ref{CompDouble}. In addition, we provide the details for Proposition 5 and 6.

\subsection{Normal Distribution}\label{CompNormal}
If the trader's prior belief on the log-price follows a normal distribution, then  \eqref{a} can be expressed as
\begin{align*}
a(t, x) = \frac{I_1 (t, x)}{I_2 (t, x)},
\end{align*}
where
\begin{align*}
\begin{cases} 
\begin{split}
I_1 (t, x) &= \int_{-\infty}^\infty z\exp\left(z \frac{x- X_0}{\sigma^2(T-t)} -  z^2 \frac{t}{2T\sigma^2(T-t)} - \frac{(z-\mu)^2}{2\sigma_D^2} \right) dz,  \\ 
\\ 
I_2 (t, x) &= \int_{-\infty}^\infty \exp \left( z\frac{x - X_0}{\sigma^2(T-t)} -  z^2 \frac{t}{2T\sigma^2(T-t)} - \frac{(z-\mu)^2}{2\sigma_D^2} \right) dz.
\end{split}	
\end{cases} 	
\end{align*}
Next, we compute $I_1$ and $I_2$ explicitly. To facilitate the representation, we denote
\begin{align*}
\begin{cases} 
\begin{split}
\eta &\equiv \eta (t) = \frac{t}{2T\sigma^2 (T-t)} + \frac{1}{2\sigma_D ^2},  \nonumber\\ 
\\ 
b &\equiv b (t, x) = -\left( \frac{x - X_0}{\sigma^2 (T-t)}  + \frac{\mu}{\sigma_D ^2} \right),\\
\\ 
c &= \frac{\mu ^2}{2 \sigma_D ^2}.
\end{split}	
\end{cases}		
\end{align*}
Substituting the above notations into $I_1 (t,x)$ and $I_2 (t,x)$ and rearranging terms,  we have
\begin{align*}
I_2 (t, x) &= \int_{-\infty}^\infty \exp \left( -(\eta z^2 + b z + c) \right) dz.
\end{align*}
By applying a change of variables $y = \eta (z + \frac{b}{2 \eta })^2$, we obtain
\begin{align*}
I_1 (t, x) &= \int_{-\infty}^\infty z \exp \left( -( \eta  z^2 + b z + c) \right) dz\\
&= \int_{\infty}^0 \frac{1}{2 \eta }  \exp (\frac{b^2 - 4 \eta c}{4 \eta }) \exp(-y) dy + \int_{0}^\infty \frac{1}{2 \eta }  \exp (\frac{b^2 - 4 \eta c}{4 \eta }) \exp(-y) dy - \frac{b}{2\eta} I_2(t, x)\\
&= - \frac{b}{2\eta} I_2(t, x).
\end{align*}
It follows immediately that
\begin{align*}
a(t, x) &= \frac{I_1 (t, x)}{I_2 (t, x)} = - \frac{b}{2\eta} = T \frac{(x - X_0) \sigma_D ^2 + \mu \sigma^2 (T-t)}{t \sigma_D ^2 + T \sigma^2 (T-t)}.
\end{align*}
This, together with \eqref{AA}, gives us 
\begin{align*}
A(t,X_t) &= \frac{a(t, X_t) - (X_t - X_0)}{T - t}\\
&= \frac{\sigma_D ^2 (X_t - X_0) + T \sigma ^2 (\mu - (X_t - X_0))}{t \sigma_D ^2 + T \sigma ^2 (T-t)}.
\end{align*}

\subsection{Double Exponential Distribution}\label{CompDouble}
If the trader's prior belief on the log-price follows a double exponential distribution, then \eqref{a} can be expressed as
\begin{align*}
a(t, x) = \frac{N_1 (t, x) + N_2 (t, x)}{H_1 (t, x) + H_2 (t, x)},
\end{align*}
where
\begin{align*}
\begin{cases} 
\begin{split}
N_1 (t, x) &= \int_{-\infty}^\theta p_1 \lambda_1 z\exp\left(z \frac{x - X_0}{\sigma^2(T-t)} -  z^2 \frac{t}{2T\sigma^2(T-t)} - \lambda_1(\theta - z) \right) dz,  \\ 
\\ N_2 (t, x) &= \int_{\theta}^\infty p_2 \lambda_2 z\exp\left(z \frac{x - X_0}{\sigma^2(T-t)} -  z^2 \frac{t}{2T\sigma^2(T-t)} - \lambda_2(z - \theta ) \right) dz,  \\ 
\\ H_1 (t, x) &= \int_{-\infty}^\theta p_1 \lambda_1 \exp \left( z\frac{x - X_0}{\sigma^2(T-t)} -  z^2 \frac{t}{2T\sigma^2(T-t)} - \lambda_1(\theta - z)  \right) dz,\\
\\ H_2 (t, x) &= \int_{\theta}^\infty p_2 \lambda_2 \exp \left( z\frac{x - X_0}{\sigma^2(T-t)} -  z^2 \frac{t}{2T\sigma^2(T-t)} - \lambda_2(z - \theta ) \right) dz.
\end{split}	
\end{cases} 	
\end{align*}
Now we compute $N_1 (t,x), N_2 (t,x), H_1 (t,x)$ and $H_2 (t,x)$ explicitly. By completing the squares with a change of variables $y = \sqrt{2 \zeta}(z + \frac{b_1}{2 \zeta})$, we have
\begin{align*}
H_1 (t, x) &= \int_{-\infty}^\theta p_1 \lambda _1 \exp \left( -(\zeta z^2 + b_1 z + c_1) \right) dz\\
&=  p_1 \lambda _1 \frac{\sqrt{2 \pi}}{\sqrt{2 \zeta}} \exp (\frac{b_1^2 - 4 \zeta c_1}{4\zeta}) \frac{1}{\sqrt{2 \pi}} \int_{-\infty}^{\sqrt{2 \zeta}(\theta + \frac{b_1}{2 \zeta})} \exp \left( -\frac{y^2}{2} \right) dy \\
&=  p_1 \lambda _1 \frac{\sqrt{\pi}}{\sqrt{\zeta}} \exp (\frac{b_1^2 - 4\zeta c_1}{4\zeta}) \Phi(d_1),
\end{align*}
where 
\begin{align*}
\begin{cases} 
\begin{split}
\zeta &\equiv \zeta (t) = \frac{t}{2T\sigma^2 (T-t)},  \\ 
\\ 
b_1 &\equiv b_1 (t,x) = -\left( \frac{x - X_0}{\sigma^2 (T-t)}  + \lambda_1 \right),\\
\\ 
c_1 &= \lambda_1 \theta,\\
\\
d_1 &\equiv d_1 (t,x) = \sqrt{2\zeta (t)} \left(\theta + \frac{b_1 (t,x)}{2\zeta (t)} \right).
\end{split}	
\end{cases}		
\end{align*}
Similarly, by applying $y = \sqrt{2 \zeta}(z + \frac{b_2}{2 \zeta})$, we obtain
\begin{align*}
H_2 (t, x) &= \int_{\theta}^\infty p_2 \lambda _2 \exp \left( -(\zeta z^2 + b_2 z + c_2) \right) dz\\
&=  p_2 \lambda _2 \frac{\sqrt{2 \pi}}{\sqrt{2 \zeta}} \exp (\frac{b_2^2 - 4 \zeta c_2}{4\zeta}) \frac{1}{\sqrt{2 \pi}} \int_{\sqrt{2 \zeta}(\theta + \frac{b_2}{2 \zeta})}^\infty \exp \left( -\frac{y^2}{2} \right) dy\\
&= p_2 \lambda _2 \frac{\sqrt{\pi}}{\sqrt{\zeta}} \exp (\frac{b_2^2 - 4 \zeta c_2}{4\zeta}) \Phi(-d_2),
\end{align*}
where 
\begin{align*}
\begin{cases} 
\begin{split}
b_2 &\equiv b_2 (t,x) = -\left( \frac{x - X_0}{\sigma^2 (T-t)}  -\lambda_2 \right),\\
\\ 
c_2 &= -\lambda_2 \theta,\\
\\
d_2 &\equiv d_2(t,x) = \sqrt{2\zeta(t)}\left(\theta + \frac{b_2(t,x)}{2\zeta(t)}\right).
\end{split}	
\end{cases}		
\end{align*}

Next, we compute $N_1$ and $N_2$ in the numerator with changes of variables $y = \zeta(z + \frac{b_1}{2\zeta})^2$ and $y = \zeta(z + \frac{b_2}{2\zeta})^2$ respectively. 
\begin{align*}
N_1 (t, x) &= \int_{-\infty}^\theta p_1 \lambda _1 z \exp \left( -(\zeta z^2 + b_1 z + c_1) \right) dz\\
&= -\int_{\zeta(\theta + \frac{b_1}{2\zeta})^2}^{\infty} p_1 \lambda _1 \frac{1}{2\zeta} \exp (\frac{b_1^2 - 4\zeta c_1}{4\zeta}) \exp(-y) dy - \frac{b_1}{2\zeta} H_1(t, x)\\ 
&= - p_1 \lambda _1 \frac{1}{2\zeta} \exp \left( - \zeta(\theta + \frac{b_1}{2\zeta})^2 \right) \exp (\frac{b_1^2 - 4\zeta c_1}{4\zeta}) - p_1 \lambda _1 \frac{b_1}{2\zeta} \frac{\sqrt{\pi}}{\sqrt{\zeta}} \exp (\frac{b_1^2 - 4\zeta c_1}{4\zeta}) \Phi(d_1),
\end{align*}
and
\begin{align*}
N_2 (t, x) &= \int_\theta^{\infty}  p_2 \lambda _2 z \exp \left( -(\zeta z^2 + b_2 z + c_2) \right) dz\\
&= \int_{\zeta(\theta + \frac{b_2}{2\zeta})^2}^{\infty} p_2 \lambda _2 \frac{1}{2\zeta} \exp (\frac{b_2^2 - 4\zeta c_2}{4\zeta}) \exp(-y) dy -  \frac{b_2}{2\zeta} H_2 (t, x)\\
&= p_2 \lambda _2 \frac{1}{2\zeta} \exp \left( - \zeta(\theta + \frac{b_2}{2\zeta})^2 \right) \exp (\frac{b_2^2 - 4\zeta c_2}{4\zeta}) - p_2 \lambda _2 \frac{b_2}{2\zeta} \frac{\sqrt{\pi}}{\sqrt{\zeta}} \exp (\frac{b_2^2 - 4\zeta c_2}{4\zeta}) \Phi(-d_2).
\end{align*}

\subsection{Proof of Propositions 5 and 6}\label{Proof_proposition5}
Differentiating $G_{stock} (t, x)$ in \eqref{Gstock} yields that
\begin{align*}
\frac{\partial G_{stock}}{\partial x} (t, x) = \left( -r + \frac{1}{2} \sigma ^2 + \frac{(\sigma_D ^2 - T \sigma ^2)(x - X_0) + \mu \sigma ^2 T + \sigma _D ^2 - T \sigma ^2}{t \sigma _D ^2 + T \sigma ^2 (T - t)} \right) e^x,
\end{align*}
and
\begin{align*}
\frac{\partial ^2 G_{stock}}{\partial x^2} (t, x) = \left( -r + \frac{1}{2} \sigma ^2 + \frac{(\sigma_D ^2 - T \sigma ^2)(x - X_0) + \mu \sigma ^2 T + 2(\sigma _D ^2 - T \sigma ^2)}{t \sigma _D ^2 + T \sigma ^2 (T - t)} \right) e^x,
\end{align*}
and 
\begin{align*}
\frac{\partial G_{stock}}{\partial t} (t, x) = - \frac{\sigma_D ^2 - T \sigma ^2}{t \sigma_D ^2 + T \sigma^2 (T-t)} e^x A(t,x),
\end{align*}
and 
\begin{align*}
\frac{\partial ^2 G_{stock}}{\partial t^2} (t, x) =2 (\frac{\sigma_D ^2 - T \sigma ^2}{t \sigma_D ^2 + T \sigma^2 (T-t)})^2 e^x A(t,x),
\end{align*}
where $A(t,x)$ is given by \eqref{A_normal}. The results follow directly from the signs of these partial derivatives. The proof of Proposition \ref{normalt} is similar to Proposition \ref{normalx}.


 \begin{small}\bibliographystyle{apa}    
\bibliography{mybib_NEW}     
 
 \end{small}

\end{document}